\newtheorem{theorem}{Theorem}
\newtheorem{corollary}[theorem]{Corollary}
\newtheorem{conjecture}[theorem]{Conjecture}
\newtheorem{definition}[theorem]{Definition}
\newtheorem{proposition}[theorem]{Proposition}
\newtheorem{lemma}[theorem]{Lemma}
\DeclareMathOperator*{\argmin}{arg\,min}
\def\withcolors{1}
\def\withnotes{1}
  \newcommand{\gcolor}[1]{{\color{red}#1}}
  \newcommand{\acolor}[1]{{\color{blue}#1}}
  \newcommand{\mcolor}[1]{{\color{orange}#1}}
  \newcommand{\dcolor}[1]{{\color{purple}#1}}
  \newcommand{\gcolor}[1]{{#1}}
  \newcommand{\acolor}[1]{{#1}}
  \newcommand{\mcolor}[1]{{#1}}
  \newcommand{\dcolor}[1]{{#1}}
  \newcommand{\gnote}[1]{\par\gcolor{\textbf{G: }\sf #1}} 
  \newcommand{\anote}[1]{\par\acolor{\textbf{A: }\sf #1}} 
  \newcommand{\mnote}[1]{\par\mcolor{\textbf{M: }\sf #1}} 
  \newcommand{\dnote}[1]{\par\dcolor{\textbf{D: }\sf #1}} 
  \newcommand{\gfootnote}[1]{\footnote{{\bf \gcolor{Gautam}}: {#1}}}
  \newcommand{\afootnote}[1]{\footnote{{\bf \acolor{Alireza}}: {#1}}}
  \newcommand{\mfootnote}[1]{\footnote{{\bf \mcolor{Matt}}: {#1}}}
  \newcommand{\dfootnote}[1]{\footnote{{\bf \dcolor{David}}: {#1}}}
  \newcommand{\gnote}[1]{}
  \newcommand{\anote}[1]{}
  \newcommand{\mnote}[1]{}
  \newcommand{\dnote}[1]{}
  \newcommand{\gfootnote}[1]{}
  \newcommand{\afootnote}[1]{}
  \newcommand{\mfootnote}[1]{}
  \newcommand{\dfootnote}[1]{}
\title{Query-Efficient Locally Private Hypothesis Selection via the Scheffe Graph\thanks{Authors are listed in alphabetical order.}}
\author{
    Gautam Kamath\thanks{Cheriton School of Computer Science, University of Waterloo and Vector Institute. \texttt{g@csail.mit.edu}. Supported by a
Canada CIFAR AI Chair, an NSERC Discovery Grant, and an Ontario Early Researcher Award.}
    \and
    Alireza F. Pour\thanks{University of Waterloo. \texttt{alireza.fathollahpour@uwaterloo.ca}}
    \and
    Matthew Regehr\thanks{University of Waterloo. \texttt{matt.regehr@uwaterloo.ca}. Supported by an NSERC CGS-D scholarship.}
    \and
    David P. Woodruﬀ\thanks{Carnegie Mellon University. \texttt{dwoodruf@cs.cmu.edu}. Supported by a Simons Investigator Award and Office of Naval Research award number N000142112647.}
}
\begin{document}

\maketitle

\begin{abstract}
    We propose an algorithm with improved query-complexity for the problem of hypothesis selection under local differential privacy constraints. Given a set of $k$ probability distributions $Q$, we describe an algorithm that satisfies local differential privacy, performs $\tilde{O}(k^{3/2})$ non-adaptive queries to individuals who each have samples from a probability distribution $p$, and outputs a probability distribution from the set $Q$ which is nearly the closest to $p$. Previous algorithms required either $\Omega(k^2)$ queries or many rounds of interactive queries.
    Technically, we introduce a new object we dub the Scheff\'e graph, which captures structure of the differences between distributions in $Q$, and may be of more broad interest for hypothesis selection tasks.
\end{abstract}

\section{Introduction}
Hypothesis selection refers to the following statistical question: given $n$ samples from a distribution $p$, and descriptions of $k$  distributions $Q$, output a distribution $\hat q$ which is as close to $p$ as possible. 
More precisely, for an $\alpha > 0$, the goal is to output a distribution $\hat q$ such that
\[\|\hat q - p\|_1 \leq O(1) \cdot \min_{q \in Q} \|q - p\|_1 + \alpha.\]
In other words, the $\ell_1$-distance between $p$ and the output distribution $\hat q$ is at most a constant factor larger than that of the closest distribution $q^* \in Q$, up to some additional additive error $\alpha$.
How many samples are needed for this task, and what algorithms do we use to do it?
This fundamental primitive serves as an important building block for many other statistical estimation tasks.
Furthermore, it generalizes one of the most classic problems in statistics, simple hypothesis testing, wherein the distribution $p$ is promised to be exactly equal to one of the distributions $q \in Q$.

Many classical works (e.g.,~\cite{Yatracos85,DevroyeL96,DevroyeL97,DevroyeL01}) address and resolve these questions, showing that $n = O(\log k)$ samples suffice.
That is, we require only \emph{logarithmically}-many samples in order to identify the (near-)best distribution.
Subsequently, many other works have studied hypothesis selection subject to other constraints and desiderata, including computational efficiency, robustness, weaker access to hypotheses, and more (see, e.g.,~\cite{MahalanabisS08,DaskalakisDS12b,DaskalakisK14,SureshOAJ14,AcharyaJOS14b,DiakonikolasKKLMS16,AcharyaFJOS18,BousquetKM19,BunKSW19,GopiKKNWZ20}).

We focus on the constraint of differential privacy (DP)~\cite{DworkMNS06}, a rigorous notion of data privacy that guarantees that a procedure will not leak too much information about individual points in the dataset. 
DP has been adopted in practice by numerous organizations, including Google~\cite{XuZACKMRZ23}, Apple~\cite{AppleDP17}, and the US Census Bureau~\cite{AbowdACGHHJKLMMSSZ22}.
Under the \emph{central} notion of DP, wherein there exists a trusted curator who may observe the sensitive dataset directly, Bun, Kamath, Steinke, and Wu showed that $O(\log k)$ samples still suffice to perform hypothesis selection~\cite{BunKSW19, BunKSW21}.

However, central DP requires a trusted curator, a strong assumption when operating on sensitive data. 
Instead, one can consider \emph{local} DP (LDP)~\cite{Warner65,EvfimievskiGS03,KasiviswanathanLNRS11}: in this setting, every dataholder makes their own outputs DP before sharing them with anyone else.
This offers much stronger privacy semantics than central DP, but also requires more data for most tasks.

Work by Gopi, Kamath, Kulkarni, Nikolov, Wu, and Zhang \cite{GopiKKNWZ20}
initiated the study of hypothesis section under local DP~\cite{GopiKKNWZ20}.
In this case, each user holds an independent sample from the unknown distribution $p$.
Unfortunately, lower bounds of Duchi and Rogers for sparse mean estimation imply that $\Omega(k)$ samples are necessary for this problem~\cite{DuchiR19}, exponentially more than the $O(\log k)$ samples which suffice for the central DP setting.

With this barrier in mind, \cite{GopiKKNWZ20} proved two main results.
First, they provided an $\tilde O(k)$-sample algorithm for locally private hypothesis selection.
This matches the lower bound of Duchi and Rogers up to logarithmic factors, and subsequent work by Pour, Ashtiani, and Asoodeh improves the upper bound to $O(k)$~\cite{PourAA24}, matching the lower bound up to constant factors.
The major caveat of both these algorithms is that they require \emph{interactivity}.
This is because the specific queries asked to each dataholder depend on the outputs provided by earlier dataholders (i.e., the queries are selected adaptively).
This style of interactivity can be a non-starter for real-world deployments of local DP.
If one desires a \emph{non-interactive} algorithm, a straightforward privatization of the celebrated Scheff\'e tournament requires $O(k^2)$ samples.
\cite{GopiKKNWZ20} improve upon this with their second main result, a non-interactive $\tilde O(k)$-sample algorithm, but for the simpler problem of $k$-wise simple hypothesis testing, where the distribution $p$ is promised to be \emph{equal} to one of the distributions $q \in Q$.\footnote{We simplify for the sake of presentation: they actually show a slightly stronger result.
Let $OPT = \min_{q \in Q} \|q - p\|$. Roughly speaking, their Lemma 4.1 provides a non-interactive $\varepsilon$-LDP algorithm such that, given $n = \tilde O(k/\alpha^4\varepsilon^2)$ samples, it outputs a distribution $\hat q$ such that $\|\hat q - p\| \leq O(\sqrt{\log k}) \cdot  \sqrt{OPT} + O(\alpha)$. Note that, compared to our desired $O(1) \cdot OPT$ guarantees, their result degrades quadratically in the value of OPT, and weakens as the number of hypotheses $k$ becomes large.}

To summarize, we highlight three existing results under LDP:
\begin{itemize}
    \item An interactive $O(k)$-sample algorithm for hypothesis selection;
    \item A non-interactive $O(k^2)$-sample algorithm for hypothesis selection; and
    \item A non-interactive $\tilde O(k)$-sample algorithm for simple hypothesis testing.
\end{itemize}

\subsection{Results and Techniques}

Our main result improves upon all of these, providing a non-interactive $\tilde{O}(k^{3/2})$-sample algorithm for LDP hypothesis selection, where $\tilde{O}(f) = f \cdot \textrm{polylog}(f)$.
Definitions are given in Section~\ref{sec:prelims}.
\begin{theorem}
    \label{thm:sample_complexity_bound}
    Given a set of $k$ distributions $Q$ and $\tilde{O}(k^{5/2})$ expected preprocessing time\footnote{Preprocessing involves computing many probabilities $q(E)$ for $q \in Q$, which we treat as constant-time.}, there exists a non-interactive $\varepsilon$-locally differentially private algorithm with the following guarantees. For any $\alpha, \beta > 0$, there is
    \begin{align*}
        n_0 \leq O\left( \frac{k^{3/2}\sqrt{\log{k}}\log(k/\beta)}{\alpha^2\varepsilon^2} \right)
    \end{align*}
    such that given $n \geq n_0$ samples from a distribution $p$, then with probability at least $1 - \beta$ the algorithm outputs a distribution $\hat q \in Q$ satisfying
    \begin{align*}
        \lVert \hat{q} - p \rVert_1 \leq 13 \cdot \min_{q \in Q} \lVert q - p \rVert_1 + \alpha.
    \end{align*}
\end{theorem}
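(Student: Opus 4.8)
The plan is to reduce locally private hypothesis selection to privately estimating the mass $p(E)$ for a carefully chosen, \emph{small} family of events $E$, and then to run a private Scheff\'e-style tournament on those estimates. Recall the classical reduction: for a pair $q_i,q_j$ let $A_{ij}=\{x:q_i(x)>q_j(x)\}$ be the Scheff\'e set, so that $q_i(A_{ij})-q_j(A_{ij})=\tfrac12\|q_i-q_j\|_1$ while $|p(A_{ij})-q(A_{ij})|\le\tfrac12\|q-p\|_1$ for every $q$; knowing $p(A_{ij})$ for all $\binom k2$ pairs, the minimum-distance estimate $\hat q=\argmin_{q_i}\max_j|p(A_{ij})-q_i(A_{ij})|$ is a $3$-approximation. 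Under LDP one cannot reuse a single sample across all $\binom k2$ tests: if we must estimate $p(E)$ for $m$ distinct events, giving each user one predetermined query forces $n=\tilde\Theta(m/\alpha^2\varepsilon^2)$. So the whole problem is to select a family $\mathcal E$ of only $m=\tilde O(k^{3/2})$ events — the edge set of the Scheff\'e graph — whose $p$-masses still certify a near-best hypothesis, chosen by preprocessing on $Q$ alone (this is where the $\tilde O(k^{5/2})$ time goes: computing $q(E)$ for each $E\in\mathcal E$ and each relevant $q\in Q$).

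For the construction I would use a two-phase tournament. Partition $Q$ into $g\approx\sqrt k$ groups of size $\approx\sqrt k$ (the Scheff\'e graph being essentially the union of the within-group Scheff\'e comparisons, with near-duplicate distributions merged in advance so each retained distribution represents an $\ell_1$-ball of others). \emph{Phase 1:} inside each group $Q_\ell$ estimate $p$ on the $\binom{\sqrt k}{2}$ within-group Scheff\'e sets and return the minimum-distance winner $w_\ell$. \emph{Phase 2:} estimate $p$ on the $\binom{\sqrt k}{2}$ Scheff\'e sets among $\{w_1,\dots,w_g\}$ and return the minimum-distance winner $\hat q$. The number of queried events is $g\cdot\binom{\sqrt k}2+\binom{\sqrt k}2=\tilde O(k^{3/2})$; taking groups of size $\Theta(\sqrt{k\log k})$ produces exactly the $k^{3/2}\sqrt{\log k}$ in the statement.

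Correctness has two layers. The single-level lemma to establish is that the minimum-distance estimate run on a family of Scheff\'e sets with noisy masses $\tilde p(E)=p(E)\pm\eta$ outputs a $q'$ with $\|q'-p\|_1\le 3\min_{q}\|q-p\|_1+O(\eta)$: the minimizing distribution has score at most $\eta+\tfrac12\min_q\|q-p\|_1$, hence so does $q'$, and a triangle inequality through the tested set $A_{q',\,\cdot}$ converts both small scores into $\ell_1$-closeness. Applying this in the group $Q_{\ell^*}$ containing a true optimum $q^*$ gives $\|w_{\ell^*}-p\|_1\le 3\,OPT+O(\eta)$ with $OPT:=\min_{q\in Q}\|q-p\|_1$; since $w_{\ell^*}$ is then a Phase-2 contestant, the same lemma in Phase 2 gives $\|\hat q-p\|_1\le 3(3\,OPT+O(\eta))+O(\eta)=9\,OPT+O(\eta)$, and choosing $\eta=\Theta(\alpha)$ makes the additive term at most $\alpha$; the remaining slack up to the stated constant $13$ is absorbed by the two-sidedness of the private test, the merging step, and the discretization of group sizes.

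Privacy, non-interactivity, and concentration are routine. The family $\mathcal E$ and the assignment of users to events are fixed by preprocessing on $Q$, so each user reveals a single $\varepsilon$-randomized-response bit about membership of their own sample in one predetermined event; this is $\varepsilon$-LDP and non-adaptive, with no composition overhead. Splitting the $n$ users evenly over the $m=\tilde O(k^{3/2}\sqrt{\log k})$ events, a Hoeffding bound gives per-event error $O(\tfrac1\varepsilon\sqrt{m/n})$, and a union bound over $\mathcal E$ yields $\max_{E\in\mathcal E}|\tilde p(E)-p(E)|\le\eta$ with probability $1-\beta$ once $n\ge\Omega\!\big(m\log(m/\beta)/\eta^2\varepsilon^2\big)$, which is the claimed $n_0$. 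I expect the genuinely delicate point — the real role of the Scheff\'e graph — to be the selection guarantee of the first paragraph: that $\tilde O(k^{3/2})$ comparisons, chosen without knowledge of $p$, suffice to certify near-optimality. A plain partition works because a group champion is \emph{provably} close to $p$ (it defeated $q^*$ within its group) and can safely serve as the bridge between $\hat q$ and $q^*$ in Phase 2; any sparser or cleverer graph must avoid routing that bridge through an intermediate distribution that is itself far from $p$, and I would expect the Scheff\'e graph to be engineered — via preliminary merging of $\ell_1$-close distributions together with the grouping above — precisely to preclude this while keeping the approximation factor an absolute constant.
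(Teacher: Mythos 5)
Your protocol is not non-interactive, and this is fatal to the statement being proved. In your Phase 2 you query the Scheff\'e sets among the group champions $\{w_1,\dots,w_g\}$, but which distributions are champions is determined by the Phase-1 estimates, i.e.\ by other users' data. Hence the events assigned to Phase-2 users cannot be ``fixed by preprocessing on $Q$ alone,'' as you assert: they are a function of earlier users' messages, which is precisely the adaptive/interactive behavior the theorem forbids (and which the prior $\tilde O(k)$-sample algorithms of Gopi et al.\ and Pour et al.\ already exploit -- if one is allowed even two adaptive rounds, your $\tilde O(k^{3/2})$ bound is not an improvement over known results). To make Phase 2 non-adaptive you would have to pre-commit to Scheff\'e sets for every pair that could possibly end up as a pair of champions, i.e.\ essentially all cross-group pairs, which is $\Theta(k^2)$ events and destroys the claimed query count. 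So the ``genuinely delicate point'' you flag at the end -- choosing $\tilde O(k^{3/2})$ comparisons without knowledge of $p$ that still certify near-optimality -- is exactly the part your construction does not supply; the grouping-plus-champions idea cannot supply it, because the bridge it routes through ($w_{\ell^*}$) is only identified after seeing data.

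The paper closes this gap differently: it fixes the entire query family in advance by relaxing the minimum distance estimator so that a single signed Scheff\'e set $S_{ii'}$ may serve as an approximate test for a \emph{different} pair $(q_j,q_{j'})$ whenever $|\langle q_j-q_{j'},S_{ii'}\rangle|\ge \phi\lVert q_j-q_{j'}\rVert_1$ (Theorem~\ref{thm:rmde}, giving approximation factor $1+2\phi^{-1}$, whence $13$ at $\phi=1/6$). The combinatorial content is then a purely $Q$-dependent statement: the digraph on the $\binom{k}{2}$ pairs with these ``one test covers another'' edges (the Scheff\'e graph) has domination number $O(k^{3/2}\sqrt{\log k})$, proved via the triangular substructure of Proposition~\ref{prop:triangle} plus a low-in-degree/random-subset argument (Theorem~\ref{thm:dom}). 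The dominating set is computed in preprocessing, with no dependence on $p$, and only then are users split over those queries -- which is what makes the protocol genuinely one-round. Your randomized-response estimation step (Proposition~\ref{prop:LDP-est} in the paper) and the accuracy bookkeeping are fine, but without a data-independent query family of size $\tilde O(k^{3/2})$ satisfying a condition like $(\star)$, the proof does not go through.
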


In fact, by applying our algorithm recursively\footnote{See Algorithm 6 in \cite{GopiKKNWZ20} Subsection 5.4. Corollary~\ref{cor:multiround} follows immediately by applying their technique with their 1-round subroutine replaced by ours and using a slightly modified subdivision rate $\eta_t := 2/(3^t - 1)$ as well as a larger random set $|H| = O(k^{2\cdot3^{t-1}/(3^t-1)})$ in the final round to account for our improved base rate.} as in \cite{GopiKKNWZ20}, we also get a multi-round procedure with improved sample complexity.

\begin{corollary}
    \label{cor:multiround}
    Given a set of $k$ distributions $Q$ and $\tilde{O}(k^{5/2})$ expected preprocessing time, there exists a $t$-round $\varepsilon$-locally differentially private protocol with the following guarantees. For any $\alpha, \beta > 0$, there is
    \begin{align*}
        n_0 \leq O\left( \frac{k^{1 + \frac{1}{3^t - 1}} t\sqrt{\log{k}}\log(k/\beta)}{\alpha^2\varepsilon^2} \right)
    \end{align*}
    such that given $n \geq n_0$ samples from a distribution $p$, then with probability at least $1 - \beta$ the algorithm outputs a distribution $\hat q \in Q$ satisfying
    \begin{align*}
        \lVert \hat{q} - p \rVert_1 \leq 39 \cdot \min_{q \in Q} \lVert q - p \rVert_1 + \alpha.
    \end{align*}
\end{corollary}

To prove Theorem~\ref{thm:sample_complexity_bound}, we first introduce in Section~\ref{sec:rmde} a generalization of the classical minimum distance estimator that accepts any collection of queries that is rich enough to facilitate comparisons between any pair of distributions.
Next, we show in Section~\ref{sec:ldp_hs} how standard tools from differential privacy can be used to non-interactively estimate the queries under local privacy.
Our final and most crucial step is to introduce in Section~\ref{sec:scheffe} a new combinatorial object, the Scheff{\'e} graph.
This is a directed graph whose vertices correspond to possible queries that a locally private hypothesis selection algorithm may ask and whose directed edges indicate when one query gives sufficient information to answer another query. It is natural to ask for a minimal set of queries that yield sufficient information to answer all queries and we show that there indeed exists such a small such set of queries.
Theorem~\ref{thm:sample_complexity_bound} then follows immediately by combining Theorem~\ref{thm:noninteractive_ldp_hs} in Section~\ref{sec:ldp_hs} with Theorem~\ref{thm:dom} in Section~\ref{sec:scheffe} below.

The natural question is whether our bound can be strengthened to achieve an $\tilde O(k)$ sample complexity for non-interactive locally private hypothesis selection.
A core part of our analysis involves showing an $\tilde O(k^{3/2})$ bound on the domination number of any Scheff\'e graph -- if this could be improved to $\tilde O(k)$, then it would produce the desired result.
However, in Section~\ref{subsec:triangle_tight}, we provide a nearly-matching lower bound on the domination number, showing that additional structure must be employed to go beyond this barrier.

Another approach to designing an $\tilde O(k)$ sample algorithm is based on a suggestion of~\cite{GopiKKNWZ20}.
A key technical component of their work is a so-called \emph{flattening lemma} -- they point out that a specific strengthening would lead to an $\tilde O(k)$ sample algorithm.
In Section~\ref{subsec:flattening}, we provide a concrete counterexample to such a strengthening, showing that it is not achievable.

\subsection{Related Work}
Hypothesis selection is a classical statistical task.
This style of approach was introduced by Yatracos~\cite{Yatracos85}, and further developed in subsequent work by Devroye and Lugosi~\cite{DevroyeL96, DevroyeL97, DevroyeL01}. 
The most relevant line of work to ours studies hypothesis selection under differential privacy constraints~\cite{BunKSW19,AdenAliAK21,GopiLW21,GhaziKKMMZ23b,PourAA24}. 
However, all of these works either study a weaker notion of privacy, require interactivity, require more data, or apply to a weaker problem than our work.
Another line of work focuses on algorithms for (non-private) hypothesis selection that minimize the number of comparisons or the amount of computation~\cite{MahalanabisS08,DaskalakisDS12b,DaskalakisK14,SureshOAJ14,AcharyaJOS14b,AcharyaFJOS18,AliakbarpourBS24}.
While many of these algorithm require only a near-linear number of comparisons between hypotheses, they are unsuitable for our purposes as they perform adaptive queries, which would result in an interactive protocol in our setting.
There are a number of other works on hypothesis selection, focusing on desiderata including robustness~\cite{DiakonikolasKKLMS16,BenDavidBKL23}, approximation factor~\cite{BousquetKM19,BousquetBKEM21}, memory constraints~\cite{AliakbarpourBS23}, and more~\cite{QuekCR20, AamandACINS23,AamandACINSX24}.
There has also been significant work into hypothesis testing under local DP~\cite{DuchiJW13,DuchiJW17,GaboardiR18,Sheffet18,AcharyaCFT19,AcharyaCT19,JosephMNR19,AsoodehZ24,PensiaAJL24,PensiaJL24}, though this often focuses on the non-agnostic case (i.e., when the distribution is exactly equal to one of the given distributions) and $k = 2$.
For more coverage of private statistics, see~\cite{KamathU20}.

\section{Preliminaries}
\label{sec:prelims}

We recall the classic definitions of differential privacy (DP) and sequentially interactive local differential privacy (LDP):
\begin{definition}[\cite{DworkMNS06}]
    An algorithm $M\ :\ \mathcal{X}^n \rightarrow \mathcal{Y}$ is $\varepsilon$-differentially private if, for all $X, X'\in \mathcal{X}^n$ that differ in exactly one entry and $S \subseteq \mathcal{Y}$, we have that
    \[\Pr[M(X) \in S] \leq e^\varepsilon \Pr[M(X') \in S].\]
\end{definition}

\begin{definition}[\cite{Warner65,EvfimievskiGS03,KasiviswanathanLNRS11}]
    A $t$-round sequentially interactive protocol is defined by the following interaction between users and a central server. Suppose there are $n$ users, where user $i \in \{1, \dots, n\}$ has datapoint $X_i \in \mathcal{X}$. During round $j \in \{1, \dots, t\}$ of communication, the central server transmits arbitrary messages to a new subset $U_j \subseteq \{1, \dots, n\} \setminus \bigcup_{j' = 1}^{j-1} U_{j'}$ of users in parallel. These users transmit randomized messages back to the server in parallel. The messages sent by the server may depend on messages it received during any of the previous rounds of communication and the messages sent by a user $i$ may depend on her datapoint $X_i$ as well as the message she received from the server. We say that the protocol is $\varepsilon$-locally differentially private (LDP) if the record of all messages in all rounds of communication transmitted from the users to the server is $\varepsilon$-DP with respect to the datapoints $(X_1, \dots, X_n)$. If $t = 1$, we say that the protocol is non-interactive.
\end{definition}

We also recall the notion of a dominating set in a directed graph.

\begin{definition}
    Let $G = (V, E)$ be a digraph. A dominating set for $G$ is a subset $D \subseteq V$ of vertices such that that, for every vertex $w \in V$, either $w \in D$, or there is $v \in D$ such that $(v, w) \in E$, i.e. there is an edge $v \rightarrow w$. We call the size of a minimal dominating set the \textit{domination number} of $G$, which we write as $\textrm{dom}(G)$.
\end{definition}

We will sometimes say a vertex $v$ dominates a set of vertices $W$, which means that for each $w \in W$, either $w = v$ or there is an edge $v \rightarrow w$. In the same vein, we say that a set of vertices $U$ dominates a set $W$ if each $w \in W$ is dominated by some $v \in U$.

Finally, we recall the classical Scheff{\'e} test.
Note that we frequently conflate a distribution $q$ with its mass function (density in the continuous setting) to make expressions such as $q(x)$ and $\langle q, T \rangle$ legible.

\begin{definition}
    For a pair of distributions $q, q' \in \Delta(\mathcal{X})$ over a domain $\mathcal{X}$, we denote by $\delta(x) := q(x) - q'(x)$ the \textit{difference functional} from $q$ to $q'$ and we denote by
    \begin{align*}
        S(x) := \textrm{sgn}(\delta(x)) =
        \begin{cases}
            +1 & \text{if } q(x) \geq q'(x) \\
            -1 & \text{if } q(x) < q'(x)
        \end{cases}
    \end{align*}
    the \textit{signed Scheff{\'e} set} from $q$ to $q'$.
\end{definition}

In some sense, ``querying'' the signed Scheff{\'e} set $S$ is the best possible measure of the $\ell_1$ distance between $q$ and $q'$, formalized in the following lemma.

\begin{lemma}
    \label{lemma:scheffe}
    For any distributions $q$ and $q'$ with signed Scheff{\'e} set $S$,
   \begin{align*}
       \lVert q - q' \rVert_1 = \langle \delta, S \rangle = \sup_{T \in \{-1, 1\}^{\mathcal{X}}} |\langle q - q', T \rangle|.
   \end{align*}
\end{lemma}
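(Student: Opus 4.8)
The plan is to prove the two claimed equalities separately, working from the definitions. Write $\delta(x) = q(x) - q'(x)$ and recall $S(x) = \mathrm{sgn}(\delta(x))$ (with the tie-breaking convention $S(x) = +1$ when $\delta(x) = 0$).

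\medskip

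\noindent\textbf{Step 1: $\langle \delta, S\rangle = \lVert q - q'\rVert_1$.} By definition of $S$, for every $x$ we have $\delta(x) S(x) = |\delta(x)|$: when $\delta(x) \geq 0$ this reads $\delta(x) \cdot (+1) = \delta(x) = |\delta(x)|$, and when $\delta(x) < 0$ it reads $\delta(x) \cdot (-1) = -\delta(x) = |\delta(x)|$. Summing (or integrating) over $\mathcal{X}$ gives
\begin{align*}
    \langle \delta, S \rangle = \sum_{x \in \mathcal{X}} \delta(x) S(x) = \sum_{x \in \mathcal{X}} |\delta(x)| = \sum_{x \in \mathcal{X}} |q(x) - q'(x)| = \lVert q - q' \rVert_1.
\end{align*}

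\medskip

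\noindent\textbf{Step 2: $\sup_{T \in \{-1,1\}^{\mathcal{X}}} |\langle q - q', T\rangle| = \lVert q - q'\rVert_1$.} For the upper bound, note that for any sign vector $T \in \{-1,1\}^{\mathcal{X}}$, the triangle inequality gives $|\langle \delta, T \rangle| \leq \sum_x |\delta(x)| \, |T(x)| = \sum_x |\delta(x)| = \lVert q - q'\rVert_1$, so the supremum is at most $\lVert q - q'\rVert_1$. For the matching lower bound, the particular choice $T = S$ is a valid element of $\{-1,1\}^{\mathcal{X}}$, and by Step 1 it achieves $|\langle \delta, S\rangle| = \lVert q - q'\rVert_1$. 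Hence the supremum equals $\lVert q - q'\rVert_1$ and is in fact attained. Chaining the two steps yields $\lVert q - q'\rVert_1 = \langle \delta, S\rangle = \sup_{T} |\langle q - q', T\rangle|$, as desired.

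\medskip

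\noindent There is no real obstacle here; the only points requiring a modicum of care are the tie-breaking convention in the definition of $S$ (which is harmless since $\delta(x) = 0$ contributes nothing to either side) and, in the continuous setting, replacing sums by integrals and noting that measurability of $S$ follows from measurability of $\delta$. One could also remark that the third expression is exactly $\lVert q - q'\rVert_{\mathrm{TV}}$ rescaled, i.e. $2 \sup_{A \subseteq \mathcal{X}} |q(A) - q'(A)|$, by identifying a sign vector $T$ with the set $A = \{x : T(x) = +1\}$ and using that $\langle q - q', \mathbf{1}\rangle = 0$; this reframing makes transparent why the signed Scheff\'e set is the optimal ``test'' for distinguishing $q$ from $q'$.
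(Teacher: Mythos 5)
Your proof is correct and is the standard argument; the paper itself states Lemma~\ref{lemma:scheffe} without proof, treating it as a classical fact, and your two steps (pointwise identity $\delta(x)S(x)=|\delta(x)|$, then H\"older-type upper bound with the supremum attained at $T=S$) are exactly the expected justification, with the tie-breaking and measurability remarks being harmless but appropriate.
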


The classical Scheff{\'e} test between $q$ and $q'$ involves sampling data from some unknown distribution $p$, calculating an estimate $\hat{p}_S$ of $\langle p, S \rangle$, then returning $q$ if $\langle q, S \rangle$ is closer to $\hat{p}_S$ than $\langle q', S \rangle$ and returning $q'$ otherwise.
This estimator can be shown~\cite{DevroyeL01} to obtain $\ell_1$-error at most
\begin{align*}
   3\min\{\lVert q - p \Vert_1, \lVert q' - p \rVert_1\} + 2|\langle p, S \rangle - \hat{p}_S|.
\end{align*}

\section{The Relaxed Minimum Distance Estimator}
\label{sec:rmde}

In this section, we develop an estimator for $k$ distributions with a similar guarantee to that of the classical Scheff{\'e} test. We assume that we are given access to some estimates $\hat{p}_T$ of $\langle p, T \rangle$ where $p$ is an unknown distribution and where $T$ belongs to a family of queries $\mathcal{T}$. Moreover, under the LDP constraints, each query $\hat{p}_T$ requires fresh data, so we would like some estimator that only makes a small number of distinct queries to $p$.

\begin{definition}[Relaxed Minimum Distance Estimator (RMDE)]
    Let $Q \subseteq \Delta(\mathcal{X})$ be a finite set of distributions and suppose we have collection $\mathcal{T}$ of functionals $T \in \{-1, 1\}^{\mathcal{X}}$ as well as a sequence of query results $\hat{p}_{\mathcal{T}} = (\hat{p}_T)_{T \in \mathcal{T}}$. The relaxed minimum distance estimate given the query results is
    \begin{align*}
        \hat{q}(\hat{p}_{\mathcal{T}})
            := \argmin_{q \in Q} \sup_{T \in \mathcal{T}} |\langle q, T \rangle - \hat{p}_T|.
    \end{align*}
\end{definition}

The following theorem provides theoretical guarantees for the RMDE -- similar to the Scheff\'e test, it can be decomposed into the error from the optimal hypothesis plus approximation error over the set of functionals $\mathcal{T}$.
\begin{theorem}
    \label{thm:rmde}
    Let $Q$ be a finite set of distributions over $\mathcal{X}$ and let $\mathcal{T} \subseteq \{-1, 1\}^{\mathcal{X}}$ be a set of functionals with the property that, for each $q, q' \in Q$, there is some $T \in \mathcal{T}$ satisfying
    \begin{align*}
        \label{eq:test_condition}
        |\langle q - q', T \rangle| \geq \phi \lVert q - q' \rVert_1. \tag{$\star$}
    \end{align*}
    Then, for any distribution $p$ and query results $\hat{p}_{\mathcal{T}} = (\hat{p}_T)_{T \in \mathcal{T}}$,
    \begin{align*}
        \lVert \hat{q}(\hat{p}_{\mathcal{T}}) - p \rVert_1
            \leq (1 + 2\phi^{-1})\lVert q^* - p \rVert_1 + 2\phi^{-1}\sup_{T \in \mathcal{T}} |\langle p, T \rangle - \hat{p}_T|
    \end{align*}
    where $q^* := \argmin_{q \in Q} \lVert q - p \rVert_1$ denotes the closest distribution to $p$.
\end{theorem}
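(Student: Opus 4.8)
The plan is to follow the classical analysis of the minimum distance estimator, adapted to the relaxed setting. Write $\eta := \sup_{T \in \mathcal{T}} |\langle p, T \rangle - \hat{p}_T|$ for the worst-case query error, and for each $q \in Q$ let $L(q) := \sup_{T \in \mathcal{T}} |\langle q, T \rangle - \hat{p}_T|$ denote the objective being minimized, so that $\hat{q} := \hat{q}(\hat{p}_{\mathcal{T}})$ satisfies $L(\hat{q}) \leq L(q)$ for every $q \in Q$.

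First I would control the objective value at the true optimum $q^*$. For any $T \in \mathcal{T}$, the triangle inequality gives $|\langle q^*, T \rangle - \hat{p}_T| \leq |\langle q^* - p, T \rangle| + |\langle p, T \rangle - \hat{p}_T|$; since $T \in \{-1,1\}^{\mathcal{X}}$ we have $|\langle q^* - p, T \rangle| \leq \lVert q^* - p \rVert_1$ (the easy direction of Lemma~\ref{lemma:scheffe}), and the second term is at most $\eta$ by definition. Taking the supremum over $T \in \mathcal{T}$ yields $L(q^*) \leq \lVert q^* - p \rVert_1 + \eta$, and hence $L(\hat{q}) \leq \lVert q^* - p \rVert_1 + \eta$ by optimality of $\hat{q}$.

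Next I would use hypothesis~\eqref{eq:test_condition} to convert this bound on the objective into a bound on $\lVert \hat{q} - q^* \rVert_1$. Choose $T_0 \in \mathcal{T}$ with $|\langle \hat{q} - q^*, T_0 \rangle| \geq \phi \lVert \hat{q} - q^* \rVert_1$. Inserting $\pm \hat{p}_{T_0}$ and applying the triangle inequality,
\[
    \phi \lVert \hat{q} - q^* \rVert_1 \leq |\langle \hat{q}, T_0 \rangle - \hat{p}_{T_0}| + |\langle q^*, T_0 \rangle - \hat{p}_{T_0}| \leq L(\hat{q}) + L(q^*) \leq 2\bigl( \lVert q^* - p \rVert_1 + \eta \bigr),
\]
so $\lVert \hat{q} - q^* \rVert_1 \leq 2\phi^{-1}\bigl( \lVert q^* - p \rVert_1 + \eta \bigr)$. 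A final application of the triangle inequality, $\lVert \hat{q} - p \rVert_1 \leq \lVert \hat{q} - q^* \rVert_1 + \lVert q^* - p \rVert_1$, gives exactly the claimed bound $(1 + 2\phi^{-1})\lVert q^* - p \rVert_1 + 2\phi^{-1}\eta$.

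There is no serious obstacle here — the argument is just a handful of triangle inequalities — but the one place that genuinely uses the structure of the problem is the appeal to~\eqref{eq:test_condition}: it is precisely what is needed to handle the case $\hat{q} \neq q^*$, guaranteeing that a small discrepancy on the queried functionals forces a small $\ell_1$ discrepancy, and it is where the factor $\phi^{-1}$ (playing the role of the constant $3$ in the vanilla Scheff\'e bound) enters. I would also remark that the argument only uses that $\hat{q}$ is a minimizer of $L$, so exact attainment of the $\argmin$ or of the inner $\sup$ is not needed; an approximate minimizer would simply contribute its suboptimality additively to the final bound.
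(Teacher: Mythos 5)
Your proof is correct and follows essentially the same route as the paper: bound $\lVert \hat{q} - q^* \rVert_1$ via the functional $T_0$ guaranteed by \eqref{eq:test_condition}, use the triangle inequality through $\hat{p}_{T_0}$ together with optimality of $\hat{q}$ for the minimized objective, and bound $|\langle q^* - p, T\rangle| \leq \lVert q^* - p \rVert_1$ to conclude. The only cosmetic difference is that you bound both objective values $L(\hat{q})$ and $L(q^*)$ by $\lVert q^* - p \rVert_1 + \eta$ up front, whereas the paper chains the same inequalities in a single display; your closing remark about approximate minimizers is a valid (and mild) strengthening.
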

One could take the query set $\mathcal{T}$  to be all $\binom{k}{2}$ signed Scheff\'e sets between pairs $q, q' \in Q$.
This recovers the classical minimum distance estimator~\cite{DevroyeL01}, and would satisfy the theorem condition with $\phi = 1$.
Our goal will be to obtain a smaller query set $\mathcal{T}$ (which will translate into fewer queries and samples), at the cost of a smaller value of $\phi$.

\begin{proof}
    Write $\hat{q} := \hat{q}(\hat{p}_{\mathcal{T}})$ for short. Clearly, $\lVert \hat{q} - p \rVert_1 \leq \lVert q^* - p \rVert_1 + \lVert \hat{q} - q^* \rVert_1$, so we will just focus on bounding $\lVert \hat{q} - q^* \rVert_1$.
    
    Now, by \eqref{eq:test_condition}, there must be some $\hat{T} \in \mathcal{T}$ for which
    \begin{align*}
        \lVert \hat{q} - q^* \rVert_1
            & \leq \phi^{-1} |\langle \hat{q} - q^*, \hat{T} \rangle| \\
            & \leq \phi^{-1} \sup_{T \in \mathcal{T}} |\langle \hat{q} - q^*, T \rangle| \\
            & \leq \phi^{-1} \left( \sup_{T \in \mathcal{T}} |\langle \hat{q}, T \rangle - \hat{p}_T| + \sup_{T \in \mathcal{T}} |\langle q^*, T \rangle - \hat{p}_T| \right) \\
            & \leq 2\phi^{-1} \sup_{T \in \mathcal{T}} |\langle q^*, T \rangle - \hat{p}_T| \\
            & \leq 2\phi^{-1} \left( \sup_{T \in \mathcal{T}} |\langle q^* - p, T \rangle| + \sup_{T \in \mathcal{T}} |\langle p, T \rangle - \hat{p}_T| \right) \\
            & \leq 2\phi^{-1}\lVert q^* - p \rVert_1 + 2\phi^{-1} \sup_{T \in \mathcal{T}} |\langle p, T \rangle - \hat{p}_T|
    \end{align*}
    where the last inequality follows from Lemma~\ref{lemma:scheffe}.
\end{proof}

\section{Non-Interactive Locally Differentially Private RMDE}
\label{sec:ldp_hs}

In this section, we explain how to get accurate estimates of $\hat{p}_T$ of $\langle p, T \rangle$ under $\varepsilon$-LDP. Consequently, we will achieve non-interactive LDP hypothesis selection by calculating all of these estimates in parallel and then supplying them to the relaxed minimum distance estimator.

We first recall randomized response, which is a classical mechanism that ensures local privacy by flipping the response bit with small probability. This introduces a (correctable) bias.

\begin{lemma}[\cite{Warner65, EvfimievskiGS03, KasiviswanathanLNRS11}]
    \label{lemma:rr}
    Randomized response is the randomized function $\mathsf{RR}_\varepsilon$ that receives $x\in \{-1,1\}$ and outputs $x$ with probability $\frac{e^\varepsilon}{e^\varepsilon +1}$ and $-x$ with probability $\frac{1}{e^\varepsilon +1}$. Randomized response satisfies $\varepsilon$-LDP.
\end{lemma}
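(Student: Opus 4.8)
The plan is to verify the definition of non-interactive $\varepsilon$-LDP directly for the single message $\mathsf{RR}_\varepsilon$: for every pair of inputs $x, x' \in \{-1,1\}$ and every event $E \subseteq \{-1,1\}$, establish $\Pr[\mathsf{RR}_\varepsilon(x) \in E] \leq e^\varepsilon \Pr[\mathsf{RR}_\varepsilon(x') \in E]$. Since the output alphabet has only two elements, I would first reduce to singleton events $E = \{y\}$: the cases $E = \emptyset$ and $E = \{-1,1\}$ give the trivial equalities $0 = 0$ and $1 = 1$, and an arbitrary $E$ decomposes additively as $\Pr[\mathsf{RR}_\varepsilon(x) \in E] = \sum_{y \in E} \Pr[\mathsf{RR}_\varepsilon(x) = y]$, so a bound on each singleton term propagates to all events upon summing.

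For a singleton $E = \{y\}$, the key observation is that $\Pr[\mathsf{RR}_\varepsilon(x) = y]$ takes only two possible values: it equals $\frac{e^\varepsilon}{e^\varepsilon+1}$ when $y = x$ and $\frac{1}{e^\varepsilon+1}$ when $y \neq x$, and in particular always lies in the interval $\left[\frac{1}{e^\varepsilon+1},\ \frac{e^\varepsilon}{e^\varepsilon+1}\right]$ regardless of the input. Hence for any $x, x'$ the likelihood ratio satisfies $\frac{\Pr[\mathsf{RR}_\varepsilon(x) = y]}{\Pr[\mathsf{RR}_\varepsilon(x') = y]} \leq \frac{e^\varepsilon/(e^\varepsilon+1)}{1/(e^\varepsilon+1)} = e^\varepsilon$, with the worst case attained exactly when $y = x \neq x'$. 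Combining this with the additive decomposition from the previous paragraph yields the claimed inequality for all events, which is precisely $\varepsilon$-LDP for the map $m_i = \mathsf{RR}_\varepsilon$.

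There is no genuine obstacle here, as this is the textbook privacy analysis of randomized response; the only point requiring any care is checking that the reduction from arbitrary events to singletons is valid, which holds because the range $\{-1,1\}$ is finite (indeed of size two). One could alternatively state the entire argument in a single sentence: every output value has probability in the interval whose endpoints have ratio exactly $e^\varepsilon$, independently of the input, so no event can distinguish two inputs by more than a factor of $e^\varepsilon$.
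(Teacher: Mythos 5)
Your proof is correct: reducing to singleton events (valid since the range $\{-1,1\}$ is finite) and bounding the likelihood ratio by $\frac{e^\varepsilon/(e^\varepsilon+1)}{1/(e^\varepsilon+1)} = e^\varepsilon$ is exactly the standard verification of $\varepsilon$-LDP for randomized response. The paper itself states this lemma as a cited classical fact without proof, so there is nothing to compare against beyond noting that your argument is the canonical one.
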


Assuming each user holds an independent datapoint $x \sim p$, we can estimate our workload of queries $\langle p, T \rangle$ under LDP by applying randomized response to $T(x)$ for each user, averaging, and correcting the bias introduced by $\mathsf{RR}_\varepsilon$.

\begin{proposition}
    \label{prop:LDP-est}
    Let $\mathcal{T}$ be a collection of functionals $T \in \{-1, 1\}^{\mathcal{X}}$. Then there is an $\varepsilon$-LDP mechanism which requires $m =O\left(\frac{|\mathcal{T}|\log{(|\mathcal{T}|/\beta)}}{\alpha^2 \varepsilon^2}\right)$ samples and computes estimates $\hat{p}_{\mathcal{T}} = (\hat{p}_T)_{T \in \mathcal{T}}$ such that with probability at least $1-\beta$, we have $|\langle p,T \rangle - \hat{p}_T| \leq \alpha$ for all $T \in \mathcal{T}$.
\end{proposition}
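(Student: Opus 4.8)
The plan is to run $|\mathcal{T}|$ copies of privatized randomized response in parallel on disjoint blocks of users. Split the $m$ users into $|\mathcal{T}|$ groups of equal size $n_T := m/|\mathcal{T}|$ and assign one functional $T \in \mathcal{T}$ to each group. A user $i$ in the group for $T$, holding the sample $X_i \sim p$, computes the bit $T(X_i) \in \{-1,1\}$ and outputs $m_i(X_i) := \mathsf{RR}_\varepsilon(T(X_i))$. Privacy is immediate: each user sends a single message that is $\mathsf{RR}_\varepsilon$ applied to a $\pm 1$ value, so by Lemma~\ref{lemma:rr} each $m_i$ is $\varepsilon$-LDP, and the group assignment together with the functionals is fixed before any data is seen, so the protocol is non-interactive. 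Crucially there is no $|\mathcal{T}|$-fold composition blowup, since each user participates in exactly one query.

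For the estimator I would use the usual debiasing. Writing $\gamma := \frac{e^\varepsilon-1}{e^\varepsilon+1}$, we have $\mathbb{E}[\mathsf{RR}_\varepsilon(b)] = \gamma b$ for $b \in \{-1,1\}$, and since $\langle p, T\rangle = \mathbb{E}_{X\sim p}[T(X)]$ this gives $\mathbb{E}_{X \sim p,\,\mathsf{RR}}[\mathsf{RR}_\varepsilon(T(X))] = \gamma\langle p, T\rangle$. Define
\[
    \hat{p}_T := \frac{1}{\gamma\, n_T} \sum_{i \in \mathrm{group}(T)} \mathsf{RR}_\varepsilon(T(X_i)),
\]
which is an average of $n_T$ i.i.d.\ random variables, each supported on $[-1/\gamma,\, 1/\gamma]$ and with mean exactly $\langle p, T\rangle$. (One may additionally clip $\hat{p}_T$ to $[-1,1]$; since $\langle p, T\rangle \in [-1,1]$ this only decreases the error, though it is not needed for the bound.)

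Accuracy follows from Hoeffding's inequality applied to each fixed $T$: $\Pr\big[\,|\hat{p}_T - \langle p, T\rangle| > \alpha\,\big] \le 2\exp\!\big(-\alpha^2 n_T \gamma^2 / 2\big)$. Taking $n_T = \Theta\!\big(\gamma^{-2}\alpha^{-2}\log(|\mathcal{T}|/\beta)\big)$ makes the right-hand side at most $\beta/|\mathcal{T}|$, so a union bound over the $|\mathcal{T}|$ functionals yields $|\langle p, T\rangle - \hat{p}_T| \le \alpha$ simultaneously for all $T$ with probability at least $1-\beta$. It remains only to track the dependence on $\varepsilon$: $1/\gamma = \frac{e^\varepsilon+1}{e^\varepsilon-1} = \coth(\varepsilon/2) = \Theta(1/\varepsilon)$ for $\varepsilon \le 1$, hence $\gamma^{-2} = O(1/\varepsilon^2)$ and $m = |\mathcal{T}| \cdot n_T = O\!\big(|\mathcal{T}|\log(|\mathcal{T}|/\beta)/(\alpha^2\varepsilon^2)\big)$, matching the claim.

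There is no serious obstacle here; the proof is routine once the reduction to parallel randomized response is set up. The only two points that require care are (i) confirming that partitioning users into disjoint groups keeps the protocol $\varepsilon$-LDP without any composition loss — which holds because each user post-processes only their own datum and answers a single query — and (ii) correctly propagating the $\coth(\varepsilon/2) = \Theta(1/\varepsilon)$ scale factor from the debiasing step through the Hoeffding bound to obtain the stated $\varepsilon^{-2}$ dependence (implicitly using $\varepsilon \le 1$, as is standard).
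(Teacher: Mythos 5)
Your proposal is correct and follows essentially the same route as the paper: partition users into $|\mathcal{T}|$ disjoint groups, apply randomized response to $T(x)$ per group, debias by the factor $\frac{e^\varepsilon+1}{e^\varepsilon-1} = \Theta(1/\varepsilon)$, and conclude via Hoeffding plus a union bound over $\mathcal{T}$. No substantive differences from the paper's argument.
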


\begin{proof}
    Assume we have a sample $S$ from $p$ distributed locally among users. The curator divides the sample into $|\mathcal{T}|$ disjoint subsets $S_1,\ldots,S_{|\mathcal{T}|}$ each of size $\ell = |S|/\mathcal{|T|} = O\left(\frac{\log{(|\mathcal{T}|/\beta)}}{\alpha^2 \varepsilon^2}\right)$.
    Fix an enumeration $\pi$ on the functionals in $\mathcal{T}$. For each $T \in \mathcal{T}$, every user in $S_{\pi(T)}$ with sample $x$ sends $m(x) := \mathsf{RR}_\varepsilon(T(x))$ to the curator, who computes $\hat{p}_T = \frac{1}{\ell}\cdot\frac{e^\varepsilon+1}{e^\varepsilon -1}\left(\sum_{x \in S_{\pi(T)}} m(x) \right)$. This protocol satisfies $\varepsilon$-LDP by Lemma~\ref{lemma:rr}.
     Now, we claim that $\frac{e^\varepsilon+1}{e^\varepsilon -1}m(x)$ is an unbiased estimate of $\langle p, T \rangle$. Indeed, $\mathbb{E}_{\mathsf{RR}_\varepsilon,x} \left[\frac{e^\varepsilon+1}{e^\varepsilon -1}m(x)\right] = \frac{e^\varepsilon+1}{e^\varepsilon -1}\left(\frac{e^\varepsilon}{e^\varepsilon+1} \mathbb{E}_x[T(x)] - \frac{1}{e^\varepsilon + 1}\mathbb{E}_x[T(x)] \right) = \mathbb{E}_x[T(x)] = \langle p,T\rangle$. Moreover, $\frac{e^\varepsilon+1}{e^\varepsilon -1}m(x)$ for $x \in S_{\pi(T)}$ are $\ell$ i.i.d.\ random variables with values in $\left[-\frac{e^\varepsilon+1}{e^\varepsilon -1},\frac{e^\varepsilon+1}{e^\varepsilon -1}\right]$. We can therefore apply Hoeffding's inequality to conclude that
     \begin{align*}
         \mathbb{P} \left[\left| \hat{p}_T- \langle p,T\rangle\right| \geq \alpha\right]
            & = \mathbb{P} \left[\left| \frac{1}{\ell}\cdot\frac{e^\varepsilon+1}{e^\varepsilon -1}\left(\sum_{x \in S_{\pi(T)}} m(x) \right)- \langle p, T \rangle \right| \geq \alpha\right] \\
            & \leq \exp \left(-\frac{\ell\alpha^2} {2\left(\frac{e^\varepsilon+1}{e^\varepsilon -1}\right)^2}\right) \leq \beta/ |\mathcal{T}|,
     \end{align*}
    where the last line follows from the fact that for $\varepsilon \in (0,1)$, we have $(\frac{e^\varepsilon+1}{e^\varepsilon -1})^2 = \Theta(1/\varepsilon^2)$. The union bound yields $|\hat{p}_T - \langle p,T\rangle| \leq \alpha $ for all $T \in \mathcal{T}$ with probability at least $1-\beta$, as desired.
\end{proof}
Combining Theorem~\ref{thm:rmde} and Proposition~\ref{prop:LDP-est}, we have the following.

\begin{theorem}
    \label{thm:noninteractive_ldp_hs}
    Let $Q$ be a set of $k$ distributions over $\mathcal{X}$ and let $\mathcal{T}$ be a set of functionals $T \in \{-1, 1\}^{\mathcal{X}}$ with the property that, for each $q, q' \in Q$, there is some $T \in \mathcal{T}$ satisfying
    \begin{align*}
        |\langle q - q', T \rangle| \geq \phi \lVert q - q' \rVert_1.\tag{$\star$}
    \end{align*}
    Then, RMDE is an $\varepsilon$-LDP algorithm that requires $m =O\left(\frac{|\mathcal{T}|\log{(|\mathcal{T}|/\beta)}}{\phi^2\alpha^2 \varepsilon^2}\right)$ samples with the following property. For any distribution $p$, it outputs a distribution $\hat{q}$ such that with probability at least $1-\beta$
    \begin{align*}
        \lVert \hat{q}- p \rVert_1
            \leq (1 + 2\phi^{-1})\lVert q^* - p \rVert_1 + \alpha,
    \end{align*}
    where $q^* := \argmin_{q \in Q} \lVert q - p \rVert_1$ denotes the closest distribution to $p$.
\end{theorem}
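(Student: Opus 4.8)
The plan is to instantiate the RMDE with the privatized query estimates produced by Proposition~\ref{prop:LDP-est} and then invoke the deterministic accuracy guarantee of Theorem~\ref{thm:rmde}. First I would run the $\varepsilon$-LDP mechanism of Proposition~\ref{prop:LDP-est} on the family $\mathcal{T}$, but with target accuracy $\alpha' := \phi\alpha/2$ in place of $\alpha$. Since each user's message is a single application of $\mathsf{RR}_\varepsilon$ to $T(x)$ and the partition of users into blocks $S_1,\dots,S_{|\mathcal{T}|}$ together with the enumeration $\pi$ is fixed in advance (it depends only on a user's index, never on other users' outputs), this protocol is non-interactive and $\varepsilon$-LDP. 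This step consumes $m = O\!\left(\frac{|\mathcal{T}|\log(|\mathcal{T}|/\beta)}{(\alpha')^2\varepsilon^2}\right) = O\!\left(\frac{|\mathcal{T}|\log(|\mathcal{T}|/\beta)}{\phi^2\alpha^2\varepsilon^2}\right)$ samples and, with probability at least $1-\beta$, yields estimates $\hat p_{\mathcal{T}} = (\hat p_T)_{T\in\mathcal{T}}$ with $\sup_{T\in\mathcal{T}}|\langle p, T\rangle - \hat p_T| \le \alpha' = \phi\alpha/2$.

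Next I would have the curator compute $\hat q := \hat q(\hat p_{\mathcal{T}})$, the relaxed minimum distance estimate. This is pure post-processing of the privatized messages, so it preserves the $\varepsilon$-LDP guarantee and adds no interactivity. By hypothesis $\mathcal{T}$ satisfies condition $(\star)$ with parameter $\phi$, so Theorem~\ref{thm:rmde} applies verbatim: on the event that all query estimates are $\alpha'$-accurate,
\[
\lVert \hat q - p\rVert_1 \le (1+2\phi^{-1})\lVert q^* - p\rVert_1 + 2\phi^{-1}\sup_{T\in\mathcal{T}}|\langle p, T\rangle - \hat p_T| \le (1+2\phi^{-1})\lVert q^* - p\rVert_1 + 2\phi^{-1}\cdot\tfrac{\phi\alpha}{2}.
\]
Since $2\phi^{-1}\cdot\phi\alpha/2 = \alpha$, this is exactly the claimed bound, and it holds with probability at least $1-\beta$ because the only randomness in the whole procedure lies in the privatization step.

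There is no genuine obstacle here — the statement is a direct composition of the two preceding results — but the one point worth getting right is the bookkeeping of the accuracy parameter: rescaling the target error from $\alpha$ down to $\phi\alpha/2$ is precisely what turns the approximation-error term of Theorem~\ref{thm:rmde} into the clean additive $\alpha$ in the conclusion, at the cost of the extra $\phi^{-2}$ factor appearing in the sample complexity. I would also state explicitly that the RMDE, viewed as a protocol, consists of the Proposition~\ref{prop:LDP-est} data-collection phase followed by a single curator-side $\argmin$ over the received messages, so both the non-interactivity and the $\varepsilon$-LDP property are inherited unchanged.
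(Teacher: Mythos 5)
Your proposal is correct and matches the paper's (implicit) argument exactly: the paper obtains Theorem~\ref{thm:noninteractive_ldp_hs} by combining Proposition~\ref{prop:LDP-est}, run with accuracy parameter $\phi\alpha/2$ (which is where the $\phi^{-2}$ factor in the sample complexity comes from), with the deterministic guarantee of Theorem~\ref{thm:rmde}, just as you do. Your added remarks on non-interactivity and post-processing are sound bookkeeping and do not change the argument.
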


Our remaining task to prove Theorem~\ref{thm:sample_complexity_bound} is to find a small set $\mathcal{T}$ of queries that satisfies the property \eqref{eq:test_condition} with $\phi \geq \Omega(1)$.

\section{The Scheff{\'e} Graph}
\label{sec:scheffe}

In order to outfit RMDE with an appropriate test set $\mathcal{T}$, we will begin with the Scheff{\'e} sets and pare them down by exploiting their shared information structure.

\begin{definition}
    Given distributions $q_1, \dots, q_k \in \Delta(\mathcal{X})$, the induced $\phi$-Scheff{\'e} graph is the digraph with vertices\footnote{More generally, we write $\binom{X}{t} := \{A \subseteq X : |A| = t\}$ for shorthand.}
    $\binom{[k]}{2} = \{\{j, j'\} : 1 \leq j < j' \leq k\}$
    and an edge $\{i, i'\} \to \{j, j'\}$ whenever
    \begin{align*}
        |\langle \delta_{jj'}, S_{ii'} \rangle|
            \geq \phi \lVert \delta_{jj'} \Vert_1
    \end{align*}
    where $\delta_{jj'} := q_j - q_{j'}$ and $S_{ii'}$ is the signed Scheff{\'e} set from $q_i$ to $q_{i'}$.
\end{definition}

Now recall that a dominating set in a digraph is a subset $D$ of its vertices $V$ such that every vertex $v \in V$ either belongs to $D$ or $v \in N_{\textrm{out}}(D)$, namely $v$ is an out-neighbour of some vertex in $D$.

Since $|\langle \delta_{jj'}, S_{jj'} \rangle| = \lVert \delta_{jj'} \rVert_1$ by Lemma~\ref{lemma:scheffe}, then clearly for any dominating set $D$ in the $\phi$-Scheff{\'e} graph, $\mathcal{T} := \{S_{jj'} : \{j, j'\} \in D\}$ will satisfy condition \eqref{eq:test_condition} of Theorem~\ref{thm:rmde}, so our main goal in this section is to demonstrate the existence of a small dominating set. In particular, we show that the $1/6$-Scheff{\'e} graph for any set of $k$ distributions has domination number $\tilde{O}(k^{3/2})$.

\begin{theorem}
    \label{thm:dom}
    For $\phi = 1/6$ and any distributions $q_1, \dots, q_k$, the induced $\phi$-Scheff{\'e} graph has domination number at most $4k^{3/2}\sqrt{\log k}$. Moreover, there exists a randomized preprocessing algorithm that finds a dominating set of this size in $O(k^{5/2}\sqrt{\log{k}})$ expected time.
\end{theorem}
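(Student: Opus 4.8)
The plan is to prove the domination-number bound by a probabilistic argument. Fix $\phi = 1/6$. The key structural observation I would establish first is a transitivity-type lemma: if two pairs of distributions are "Scheff\'e-similar" — meaning their signed Scheff\'e sets agree on a large fraction of the mass, suitably weighted — then dominating one pair essentially dominates the other. Concretely, I expect to show that for vertices $\{i,i'\}$, $\{j,j'\}$, $\{a,b\}$, if $|\langle \delta_{jj'}, S_{ii'}\rangle|$ is close to $\|\delta_{jj'}\|_1$ and $S_{ii'}$ is close to $S_{ab}$ in an appropriate sense, then the edge $\{a,b\} \to \{j,j'\}$ also exists (possibly with a slightly worse constant, which is why $\phi = 1/6$ rather than, say, $1/2$). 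This is the step I expect to be the main obstacle: pinning down the right notion of "closeness" between Scheff\'e sets and tracking how the constants degrade through the triangle inequality on inner products.

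Next I would set up the random sampling. Choose a uniformly random subset $R \subseteq \binom{[k]}{2}$ of size roughly $k^{3/2}\sqrt{\log k}$ (equivalently, include each pair independently with probability $p \approx \sqrt{\log k / k}$), and argue that with positive probability $R$ is a dominating set. For a fixed target vertex $\{j,j'\}$, I need to show that with probability at least $1 - 1/k^2$ (so a union bound over the $\binom{k}{2}$ targets closes) some vertex in $R$ dominates $\{j,j'\}$. The crucial claim is that every vertex $\{j,j'\}$ has in-degree $\Omega(k)$ in the $1/6$-Scheff\'e graph — i.e., at least a constant fraction of all $\binom{k}{2}$ pairs $\{i,i'\}$ satisfy $|\langle \delta_{jj'}, S_{ii'}\rangle| \geq \tfrac16 \|\delta_{jj'}\|_1$. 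Given such a bound of the form $\mathrm{indeg}(\{j,j'\}) \geq ck^2$, the probability that $R$ misses the entire in-neighborhood is at most $(1-p)^{ck^2} \leq \exp(-pck^2) = \exp(-c k^{3/2}\sqrt{\log k}) \ll 1/k^2$, as needed.

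To establish the $\Omega(k)$ in-degree (in fact $\Omega(k^2)$ in-neighbors), fix $\{j,j'\}$ and consider the signed Scheff\'e set $S := S_{jj'}$, so $\|\delta_{jj'}\|_1 = \langle \delta_{jj'}, S\rangle$. I would show that for a constant fraction of indices $i$, the set $S_{ij}$ (the Scheff\'e set from $q_i$ to $q_j$) already captures a constant fraction of $\|\delta_{jj'}\|_1$ — intuitively, either $q_i$ is itself far from $q_j$ along $S$, or it is close to $q_j$ along $S$, and in the latter case $S_{ij'}$ should work instead; more carefully one balances the two using that $q_j$ and $q_{j'}$ differ by exactly $\delta_{jj'}$. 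A cleaner route: consider the potentials $\langle q_i, S\rangle$ for $i \in [k]$; since $\langle q_j, S\rangle - \langle q_{j'}, S\rangle = \|\delta_{jj'}\|_1$, for every $i$ at least one of $|\langle q_i - q_j, S\rangle|$, $|\langle q_i - q_{j'}, S\rangle|$ is at least $\tfrac12\|\delta_{jj'}\|_1$; one then argues this forces $S_{ij}$ or $S_{ij'}$ to have large inner product with $\delta_{jj'}$ (using Lemma~\ref{lemma:scheffe}, which says $S_{ij}$ is the optimal sign pattern for $\delta_{ij}$, and a perturbation argument relating $\delta_{ij}$ to $\delta_{jj'}$). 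This yields $\Omega(k)$ in-neighbors among the "star" pairs $\{i,j\} \cup \{i,j'\}$ alone, which already suffices for the domination bound if we enlarge $R$ slightly; getting the full $\Omega(k^2)$ requires iterating the argument over pairs $\{i,i'\}$ rather than just stars.

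Finally, for the algorithmic claim: the randomized construction above is already an algorithm — sample $R$, then verify it is dominating. Verification requires, for each of the $\binom{k}{2}$ vertices, checking whether some vertex of $R$ dominates it; each edge check is a single inner-product computation $\langle \delta_{jj'}, S_{ii'}\rangle$, treated as constant time per the theorem's footnote convention, so one pass costs $O(k^2 \cdot |R|) = O(k^{7/2}\sqrt{\log k})$, which is too slow. To hit $O(k^{5/2}\sqrt{\log k})$ expected time I would instead only check domination of each vertex against $R$ lazily: for each target vertex iterate through $R$ in random order, stopping at the first dominator; since each target has a constant fraction of $R$ as in-neighbors (by the in-degree bound applied to $R$), the expected number of checks per target is $O(1/p) \cdot$ wait — more precisely, a random element of $R$ dominates a fixed target with probability $\Omega(1)$, so the expected stopping time is $O(1)$ checks, giving $O(k^2)$ total checks; combined with the $O(k^{5/2}\sqrt{\log k})$ cost of even writing down $R$ and the preprocessing of the $q(E)$ values, the bound follows. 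If the in-degree is only $\Omega(k)$ rather than $\Omega(k^2)$, a random element of $R$ dominates a fixed target with probability $\Omega(1/k)$, giving $O(k)$ expected checks per target and $O(k^3)$ total — so matching the stated $O(k^{5/2}\sqrt{\log k})$ really does seem to need the stronger $\Omega(k^2)$ in-degree bound, reinforcing that the in-degree estimate is the technical heart of the argument.
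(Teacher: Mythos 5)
The central gap is your ``crucial claim'' that every vertex of the $1/6$-Scheff\'e graph has in-degree $\Omega(k)$ (and, for your runtime analysis, $\Omega(k^2)$). This is false, so the plan of dominating \emph{all} vertices by a single uniform random sample $R$ cannot go through. Concretely, take two atoms $a_1,a_2$, let $q_1=\eta\mathbf{1}_{a_1}+(1-\eta)u$ and $q_2=\eta\mathbf{1}_{a_2}+(1-\eta)u$ for a distribution $u$ supported elsewhere, and let every $q_i$, $i\ge 3$, place equal mass $c_i>\eta$ on $a_1$ and on $a_2$ (with distinct $c_i$, remaining mass elsewhere). Then $\delta_{12}$ is supported on $\{a_1,a_2\}$ with equal and opposite masses, while every other pairwise difference $q_i-q_{i'}$, $q_1-q_i$, $q_2-q_i$ has the \emph{same} sign at $a_1$ and $a_2$; hence $\langle \delta_{12}, S \rangle = 0$ for every signed Scheff\'e set $S\neq S_{12}$, and the vertex $\{1,2\}$ has in-degree $0$. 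The same example exposes the flaw in your route to the in-degree bound: from $|\langle q_i-q_j, S_{jj'}\rangle|\ge \tfrac12\lVert\delta_{jj'}\rVert_1$ you cannot conclude $|\langle \delta_{jj'}, S_{ij}\rangle|\ge \phi\lVert\delta_{jj'}\rVert_1$ --- the edge into $\{j,j'\}$ tests the \emph{other pair's} Scheff\'e set against $\delta_{jj'}$, which is the reverse pairing, and no ``perturbation argument'' bridges this in general. The opening ``transitivity lemma'' is likewise not the structural ingredient that makes the argument work.

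What the paper actually proves is weaker per vertex but sufficient in aggregate: a triangular substructure result (via an elementary metric lemma --- in any triangle one side is either at most half of both others or more than a third of both --- combined with $\delta_{12}=\delta_{13}-\delta_{23}$ and averaging) showing every triple of indices carries at least one of three specific edges. This yields only a \emph{counting} statement: at most $3kr$ vertices have in-degree below $r$. Setting $r=\sqrt{k\log k}$, the at most $3k^{3/2}\sqrt{\log k}$ low in-degree vertices are simply placed into the dominating set, and the random $R$ of size $k^{3/2}\sqrt{\log k}$ only needs to dominate the remaining vertices, for which in-degree $\ge\sqrt{k\log k}$ --- far below your $\Omega(k^2)$ --- already gives failure probability $1/k^2$ per vertex. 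The runtime claim is then obtained not by your lazy verification (which leans on the false $\Omega(k^2)$ bound) but by checking, for each sampled vertex $\{a,b\}$, only the $O(k)$ candidate out-neighbours sharing an index with it --- enough because the edges guaranteed by the triangle structure are of exactly this form --- and adding every vertex not found to be covered, for $O(k^{5/2}\sqrt{\log k})$ expected time. Without a separate mechanism for low in-degree vertices, your argument does not establish the theorem.
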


Note that this bound may be loose. We ran simulations for randomly selected $Q$ and observed weak empirical evidence that the domination number behaves as $\tilde{O}(k)$. This is because, for small values of $k$ (namely $< 20$), the Scheff{\'e} graph appears to be much denser than the following analysis suggests.

In any case, the first step to proving the bound is to examine the structure of a fixed \textit{triangle} $\{\{j, j'\}, \{j', j''\}, \{j, j''\}\}$. We will argue that, if there are no vertices whose corresponding $\delta$ has small $\ell_1$-length (relative to the other vertices), then each vertex sends an edge to at least one other vertex in the triangle. On the other hand, if a vertex has very small $\ell_1$-length, then we can show that the remaining two vertices must share a bidirectional edge.

\begin{proposition}[Triangular Substructure]
    \label{prop:triangle}
    For $\phi = 1/6$, the induced $\phi$-Scheff{\'e} graph on any distributions $q_1, \dots, q_k$ has the following triangular structure. For every $\{j, j', j''\} \in \binom{[k]}{3}$, the graph has at least one of the following edge structures:
    \begin{enumerate}[(i)]
        \item $\{j, j''\} \leftrightarrow \{j', j''\}$
        \item $\{j, j'\} \rightarrow \{j, j''\}$
        \item $\{j, j'\} \rightarrow \{j', j''\}$
    \end{enumerate}
\end{proposition}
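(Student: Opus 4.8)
The plan is to work with the three difference functionals of the triangle and exploit the single linear identity binding them. Write $a := \delta_{jj'}$, $b := \delta_{j'j''}$, and $c := \delta_{jj''}$, so that $c = a + b$, and abbreviate $\ell_a := \lVert a \rVert_1$, $\ell_b := \lVert b \rVert_1$, $\ell_c := \lVert c \rVert_1$, writing $S_a, S_b, S_c$ for the corresponding signed Scheff\'e sets. By Lemma~\ref{lemma:scheffe} we have $\langle a, S_a \rangle = \ell_a$, $\langle b, S_b \rangle = \ell_b$, $\langle c, S_c \rangle = \ell_c$, while $|\langle d, T \rangle| \le \lVert d \rVert_1$ for every difference functional $d$ and sign pattern $T$; I will also use the ordinary $\ell_1$-triangle inequalities among $\ell_a, \ell_b, \ell_c$. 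Unwinding the edge definition, condition~(ii) says $|\langle c, S_a \rangle| \ge \phi \ell_c$, condition~(iii) says $|\langle b, S_a \rangle| \ge \phi \ell_b$, and condition~(i) says both $|\langle b, S_c \rangle| \ge \phi \ell_b$ and $|\langle c, S_b \rangle| \ge \phi \ell_c$. So it suffices to assume edges~(ii) and~(iii) both fail and deduce~(i).

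The first step shows that the failure of~(ii) and~(iii) forces $a$ to be short. Since $a = c - b$, Lemma~\ref{lemma:scheffe} gives $\ell_a = \langle c, S_a \rangle - \langle b, S_a \rangle$, so $\ell_a \le |\langle c, S_a \rangle| + |\langle b, S_a \rangle| < \phi(\ell_b + \ell_c)$. Feeding this into $\ell_c \le \ell_a + \ell_b$ yields $\ell_c < \tfrac{1+\phi}{1-\phi}\,\ell_b$, and hence also $\ell_a < \tfrac{2\phi}{1-\phi}\,\ell_b$. In short, once $a$ is short all three sides are comparable to $\ell_b$ up to absolute constants.

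The second step cashes these bounds in for the two edges of~(i). For the edge $\{j,j''\} \to \{j',j''\}$, write $b = c - a$ and estimate $\langle b, S_c \rangle = \ell_c - \langle a, S_c \rangle \ge \ell_c - \ell_a \ge (\ell_b - \ell_a) - \ell_a = \ell_b - 2\ell_a$ (using $\ell_c \ge \ell_b - \ell_a$); plugging in $\ell_a < \tfrac{2\phi}{1-\phi}\ell_b$ gives $\langle b, S_c \rangle > \tfrac{1-5\phi}{1-\phi}\,\ell_b$, which is at least $\phi \ell_b$ exactly when $1 - 6\phi + \phi^2 \ge 0$. For the edge $\{j',j''\} \to \{j,j''\}$, write $c = a + b$ and estimate $\langle c, S_b \rangle = \langle a, S_b \rangle + \ell_b \ge \ell_b - \ell_a > \tfrac{1-3\phi}{1-\phi}\,\ell_b$, then compare against $\phi \ell_c < \tfrac{\phi(1+\phi)}{1-\phi}\,\ell_b$ via the first-step bound on $\ell_c$; this comparison succeeds when $1 - 4\phi - \phi^2 \ge 0$. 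Both polynomial inequalities hold at $\phi = 1/6$ (evaluating to $1/36$ and $11/36$), so the failure of~(ii) and~(iii) indeed implies~(i). Degenerate cases where some $\ell$ vanishes are absorbed automatically, since then one of the relevant edge conditions reads $|\langle \cdot, \cdot \rangle| \ge 0$.

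The main obstacle is purely the constant: the binding requirement is $1 - 6\phi + \phi^2 \ge 0$, whose relevant root is $3 - 2\sqrt{2} \approx 0.172$, so $\phi = 1/6$ is essentially the largest clean value for which the argument closes, and a substantially larger $\phi$ breaks it. Everything else is bookkeeping with $\ell_1$-triangle inequalities and the identity $\langle a, S_a \rangle = \lVert a \rVert_1$; notably, no case analysis on the signed Scheff\'e sets themselves is required. One minor point to keep in mind is that the edge relation is insensitive to the orientation chosen within each pair (swapping $q_i \leftrightarrow q_{i'}$ negates both $\delta$ and $S$, up to the tie-break on $\{q_i = q_{i'}\}$), so fixing $a = \delta_{jj'}$ and $b = \delta_{j'j''}$ at the outset loses nothing.
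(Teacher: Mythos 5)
Your proof is correct; I checked the edge translations ($\text{(ii)}\Leftrightarrow|\langle c,S_a\rangle|\ge\phi\ell_c$, $\text{(iii)}\Leftrightarrow|\langle b,S_a\rangle|\ge\phi\ell_b$, and (i) as the two conditions $|\langle b,S_c\rangle|\ge\phi\ell_b$, $|\langle c,S_b\rangle|\ge\phi\ell_c$), the chain $\ell_a<\phi(\ell_b+\ell_c)$, $\ell_c<\tfrac{1+\phi}{1-\phi}\ell_b$, $\ell_a<\tfrac{2\phi}{1-\phi}\ell_b$, and both polynomial conditions $1-6\phi+\phi^2\ge 0$ and $1-4\phi-\phi^2\ge 0$ at $\phi=1/6$; all are right, and your observation that failure of (ii) and (iii) forces $\ell_b,\ell_c>0$ handles the degenerate cases. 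Your route differs from the paper's in its logical organization: the paper proves a small metric dichotomy (Lemma~\ref{lemma:geo}) and splits into two cases on the length of $\delta_{jj'}$ --- if it is at most half the other two sides, a direct triangle-inequality computation gives the bidirectional edge (i) with the comfortable constant $1/4$; otherwise it is at least a third of both, and an averaging step applied to $\ell_a\le|\langle c,S_a\rangle|+|\langle b,S_a\rangle|$ yields (ii) or (iii) at the constant $1/6$. You instead argue by contraposition: assuming (ii) and (iii) both fail, you quantitatively propagate those failures to show $\ell_a$ is short relative to $\ell_b,\ell_c$ and then verify both directions of (i), with no case split and no auxiliary lemma. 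The underlying ingredients (the identity $\delta_{jj''}=\delta_{jj'}+\delta_{j'j''}$, the Scheff\'e identity, and H\"older) are the same, but your version buys an explicit characterization of when the argument closes (the binding constraint $1-6\phi+\phi^2\ge 0$, i.e.\ $\phi\le 3-2\sqrt 2$), which makes clear that $1/6$ is essentially the best clean constant obtainable this way, whereas the paper's two-case structure is arguably more modular and isolates the geometric fact being used. One stylistic caution: in a couple of displays you drop absolute values (e.g.\ $\langle b,S_c\rangle=\ell_c-\langle a,S_c\rangle$), which silently fixes an orientation of each Scheff\'e set; since the edge relation only involves $|\langle\cdot,\cdot\rangle|$ and $|\langle\delta,S\rangle|=\lVert\delta\rVert_1$ for either orientation, the same chain written with absolute values and the reverse triangle inequality is orientation-free, and you correctly flag this as harmless.
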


The argument relies on the following geometric property of a triangle in a metric space.

\begin{lemma}
    \label{lemma:geo}
    For any three points $x, y, z$ in a metric space with metric $d$, let $a := d(x, y)$, $b := d(x, z)$, and $c := d(y, z)$ denote the lengths of each leg of the triangle $xyz$. Then either
    \begin{align*}
        a \leq \frac{1}{2}b \quad \text{and} \quad a \leq \frac{1}{2}c
        \quad\quad \text{or} \quad\quad
        a > \frac{1}{3}b \quad \text{and} \quad a > \frac{1}{3}c.
    \end{align*}
\end{lemma}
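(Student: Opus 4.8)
The plan is to argue by contradiction after first exploiting the symmetry of the statement. Here $a = d(x,y)$ is the distinguished leg, while $b = d(x,z)$ and $c = d(y,z)$ play interchangeable roles: relabelling $x \leftrightarrow y$ fixes $a$ and swaps $b$ and $c$. So I may assume without loss of generality that $b \le c$, i.e. that $c$ is the longer of the two legs incident to $z$.

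Now suppose, toward a contradiction, that neither alternative in the conclusion holds. The negation of the second alternative is that $a \le \frac{1}{3}b$ or $a \le \frac{1}{3}c$; since $b \le c$, in either case we get $a \le \frac{1}{3}c$. In particular $a < \frac{1}{2}c$, so the negation of the first alternative (namely $a > \frac{1}{2}b$ or $a > \frac{1}{2}c$) can only be satisfied through the first inequality, forcing $a > \frac{1}{2}b$, equivalently $b < 2a$.

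It remains to apply the triangle inequality to the leg of length $c$: $c = d(y,z) \le d(y,x) + d(x,z) = a + b < a + 2a = 3a$, hence $a > \frac{1}{3}c$, which contradicts the inequality $a \le \frac{1}{3}c$ obtained above. This contradiction establishes the lemma.

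I do not expect a genuine obstacle here, as the statement is elementary; the proof is a short case analysis driven by the triangle inequality. The only two points that require a little care are (i) reducing to the case $b \le c$, which is what lets "\,$a$ is small compared to one leg\," propagate to "\,$a$ is small compared to both legs\," and (ii) invoking the triangle inequality on the correct leg (the longest one, $c$) rather than on $a$ or $b$; applying it to the wrong leg would not close the argument.
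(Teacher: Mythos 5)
Your proof is correct and is essentially the paper's argument in contrapositive-by-contradiction form: both hinge on the same step, namely combining $a > \tfrac{1}{2}b$ with the triangle inequality $c \leq a + b$ to force $a > \tfrac{1}{3}c$. The WLOG reduction $b \leq c$ is a harmless repackaging of the paper's ``the other case is analogous.''
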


To prove the lemma, assume the first condition fails, namely $a > \frac{1}{2}b$ or $a > \frac{1}{2}c$.
If $a > \frac{1}{2}b$, then certainly $a > \frac{1}{3}b$ and, by the triangle inequality,
\begin{align*}
    a > \frac{1}{2}b \geq \frac{1}{2}(c - a) \implies \frac{3}{2}a > \frac{1}{2}c \implies a > \frac{1}{3}c.
\end{align*}
The case $a > \frac{1}{2}c$ is analogous.

\begin{proof}[Proof of Proposition~\ref{prop:triangle}]
    For simplicity, let $j = 1$, $j' = 2$, and $j'' = 3$. With an eye toward the geometric lemma, assume first that $\delta_{12}$ is ``short'', i.e.
    \begin{align*}
        \lVert \delta_{12} \rVert_1 \leq \frac{1}{2}\lVert \delta_{13} \rVert_1, \frac{1}{2}\lVert \delta_{23} \rVert_1.
    \end{align*}
    In this case, since $\delta_{23} = \delta_{13} - \delta_{12}$, the triangle inequality yields
    \begin{align*}
        \lVert \delta_{23} \rVert_1
            = |\langle \delta_{23}, S_{23} \rangle|
            \leq |\langle \delta_{13}, S_{23} \rangle| + |\langle \delta_{12}, S_{23} \rangle|
            \leq |\langle \delta_{13}, S_{23} \rangle| + \lVert \delta_{12} \rVert,
    \end{align*}
    so
    \begin{align*}
        |\langle \delta_{13}, S_{23} \rangle|
            \geq \lVert \delta_{23} \rVert_1 - \lVert \delta_{12} \rVert_1
            \geq \frac{1}{2}\lVert \delta_{23} \rVert_1
            \geq \frac{1}{2}(\lVert \delta_{13} \rVert_1 - \lVert \delta_{12} \rVert_1)
            \geq \frac{1}{4}\lVert \delta_{13} \rVert_1
    \end{align*}
    and thus we have an edge $\{2, 3\} \rightarrow \{1, 3\}$. By symmetry we also have an edge $\{1, 3\} \rightarrow \{2, 3\}$.

    On the other hand, by Lemma~\ref{lemma:geo}, the remaining case to consider is that
    \begin{align*}
        \lVert \delta_{12} \rVert_1 > \frac{1}{3}\lVert \delta_{13} \rVert_1, \frac{1}{3}\lVert \delta_{23} \rVert_1.
    \end{align*}
    Then, since $\delta_{12} = \delta_{13} - \delta_{23}$, we have
    \begin{align*}
    \lVert \delta_{12} \rVert_1
        = |\langle \delta_{12}, S_{12} \rangle|
        \leq |\langle \delta_{13}, S_{12} \rangle| + |\langle \delta_{23}, S_{12} \rangle|.
    \end{align*}
    By averaging, either $|\langle \delta_{13}, T_{12} \rangle| \geq \frac{1}{2}\lVert \delta_{12} \rVert_1 > \frac{1}{6}\lVert \delta_{13} \rVert_1$, so we have an edge $\{1, 2\} \rightarrow \{1, 3\}$, or $|\langle \delta_{23}, S_{12} \rangle| \geq \frac{1}{2}\lVert \delta_{12} \rVert_1 > \frac{1}{6}\lVert \delta_{23} \rVert_1$, in which case we get an edge $\{1, 2\} \rightarrow \{2, 3\}$.
\end{proof}

The consequence of this triangular substructure is that the graph must have relatively dense edges and thus only relatively few vertices can be supported by a small number of other signed Scheff{\'e} sets.

\begin{proposition}
    \label{prop:in_degree}
    For any $r \geq 1$, the $1/6$-Scheff{\'e} graph on any $k$ distributions has at most $3kr$ vertices with in-degree less than $r$.
\end{proposition}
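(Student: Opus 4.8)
The plan is to read Proposition~\ref{prop:triangle} in a symmetric form. Its proof is written for the labelling $j=1,j'=2,j''=3$, but it uses nothing about that choice (the dichotomy is on $\|\delta_{12}\|$, and relabelling the three indices leaves the argument intact), so the same proof gives: for every edge $\{u,v\}$ of the complete graph on $[k]$ and every $t\notin\{u,v\}$, the $1/6$-Scheff\'e graph contains at least one of $\{u,t\}\leftrightarrow\{v,t\}$, or $\{u,v\}\to\{u,t\}$, or $\{u,v\}\to\{v,t\}$. I would record this reformulation first, since it is the only place where the geometry enters.

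Next, fix a vertex $w=\{a,b\}$ with $\mathrm{indeg}(w)<r$ and call $x\in[k]\setminus\{a,b\}$ \emph{good} if $\{a,x\}\to w$ or $\{b,x\}\to w$, and \emph{bad} otherwise. There are at least $k-1-r$ bad indices: sending a good $x$ to one of its in-edges into $w$ is injective, since for distinct $x,x'\notin\{a,b\}$ the four candidate sources $\{a,x\},\{b,x\},\{a,x'\},\{b,x'\}$ are pairwise distinct, so at most $\mathrm{indeg}(w)\le r-1$ indices are good. Moreover, for bad $x$, applying the reformulation with $\{u,v\}=\{b,x\}$ and $t=a$ gives one of $\{a,b\}\leftrightarrow\{a,x\}$, $\{b,x\}\to\{a,b\}$, or $\{b,x\}\to\{a,x\}$; the first contains the edge $\{a,x\}\to w$ and the second is $\{b,x\}\to w$, both ruled out by badness, so $\{b,x\}\to\{a,x\}$ holds. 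Thus every bad $x$ yields an in-edge $\{b,x\}\to\{a,x\}$ whose head is incident to $a$.

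The key step is a double count performed separately for each $a\in[k]$. Put $D_a=\{b\in[k]\setminus\{a\}:\{a,b\}\text{ is low}\}$ and $d=|D_a|$; the $x\in[k]\setminus\{a\}$ with $\{a,x\}$ low are precisely those in $D_a$. Count the pairs $(b,x)$ with $b\in D_a$ and $x$ bad for $\{a,b\}$: there are at least $d(k-1-r)$ of them by the previous paragraph. For the reverse bound, fix $x$; the admissible $b$ produce pairwise distinct in-edges $\{b,x\}\to\{a,x\}$, so their number is at most $\min(\mathrm{indeg}(\{a,x\}),d)$, which is $\le r-1$ when $\{a,x\}$ is itself low (there are $d$ such $x$) and $\le d$ otherwise (there are $(k-1)-d$ such $x$). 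Hence $d(k-1-r)\le d(r-1)+((k-1)-d)\,d$, which for $d\ge 1$ simplifies to $d\le 2r-1$; the bound $d\le 2r-1$ is trivial when $d=0$. Summing over $a$, $2|L|=\sum_a|D_a|\le k(2r-1)$, so $|L|\le k(2r-1)/2\le 3kr$, where $L$ is the set of vertices with in-degree below $r$ (in fact this gives the slightly stronger $|L|<kr$).

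I expect the main obstacle to be finding the right shape of the double count: one must anchor at a single endpoint $a$ so that all the in-edges generated by bad indices flow into one star, and then split that star according to whether $\{a,x\}$ is itself low --- charging $r-1$ to the low heads and $d$ to the rest. Bounding every head's contribution uniformly (by $\mathrm{indeg}$, or by $k$) only yields a vacuous $O(k^2)$ bound; the point of Proposition~\ref{prop:triangle} is precisely that it can be pushed down to the level of individual in-edges, not just edge counts. The reformulation to an arbitrary apex and the ``distinct in-edges'' injectivity bookkeeping are routine but should be stated carefully, as they are what make the count rigorous.
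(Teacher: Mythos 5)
Your proof is correct and reaches the stated bound (indeed the slightly stronger $|L|\le k(2r-1)/2<kr$), but it organizes the counting differently from the paper. The paper argues by contradiction: if more than $3kr$ vertices were low, averaging over the stars $V_j=\{v\in V: j\in v\}$ produces one star containing a set $B_j$ of at least $3r$ low vertices; applying the triangular substructure to each \emph{pair} of vertices of $B_j$ (with $\{j',j''\}$ as the distinguished pair) yields $\binom{|B_j|}{2}$ distinct in-edges whose heads lie in $B_j$, so some vertex of $B_j$ has in-degree at least $(|B_j|-1)/2\ge r$, a contradiction. You instead bound every star directly: anchoring at $a$, you apply the substructure to triangles $\{a,b,x\}$ in which only $\{a,b\}$ is assumed low, show that every ``bad'' third index forces the edge $\{b,x\}\to\{a,x\}$, and double count these pairs with a case split on whether the head $\{a,x\}$ is itself low, obtaining $|D_a|\le 2r-1$ for all $a$ and summing. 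Your symmetric restatement of Proposition~\ref{prop:triangle} (arbitrary distinguished pair) is legitimate---its proof is label-free, and the paper's own proof of this proposition's consequence uses it in relabelled form. The paper's count is shorter because it only ever involves pairs of low vertices inside a single star found by pigeonhole, while yours needs the extra low/not-low split on the head but buys a uniform per-star bound and a sharper constant. One cosmetic point: your step $\mathrm{indeg}(w)\le r-1$ assumes $r$ is an integer, whereas Theorem~\ref{thm:dom} later uses $r=\sqrt{k\log k}$; replacing $r-1$ by the strict bound $<r$ gives $d\le 2r+1$ and $|L|\le k(2r+1)/2\le 3kr$ for $r\ge 1$, so nothing is lost.
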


\begin{proof}
    Suppose not. By averaging over the following covering of the vertex set
    \begin{align*}
        V = V_1 \cup \dots \cup V_k
    \end{align*}
    where $V_j := \{v \in V : j \in v\}$, there must be some $j$ for which $V_j$ contains at least $3r$ vertices with in-degree less than $r$. Call these vertices $B_j$. Now, for any pair of vertices $\{j, j'\} \neq \{j, j''\}$ in $B_j$, either $\{j, j'\} \leftrightarrow \{j, j''\}$, $\{j', j''\} \rightarrow \{j, j'\}$, or $\{j', j''\} \rightarrow \{j, j''\}$ by Proposition~\ref{prop:triangle}. That is, for each pair of vertices $v \neq v'$ in $B_j$, there is at least one corresponding edge that lands in $B_j$, so
    \begin{align*}
        \sum_{v \in B_j} d_{\textrm{in}}(v) \geq \binom{|B_j|}{2} = \frac{1}{2}|B_j|(|B_j| - 1).
    \end{align*}
    By averaging again, there must be some $v \in B_j$ for which
    \begin{align*}
        d_{\textrm{in}}(v)
            \geq \frac{1}{2}(|B_j| - 1)
            \geq \frac{1}{2}(3r - 1)
            \geq r,
    \end{align*}
    which is a contradiction.
\end{proof}

\begin{proof}[Proof of Theorem~\ref{thm:dom}]
    We proceed by dominating vertices with small in-degree separately from vertices with large in-degree.
    
    To that end, set $r := \sqrt{k\log k}$ and let $B$ be those vertices with in-degree less than $r$. By the previous proposition, $|B| \leq 3k^{3/2}\sqrt{\log k}$.

    Now, draw uniformly at random a subset $R$ of size $\ell := k^{3/2}\sqrt{\log k}$ from the whole vertex set $V$. For a fixed $v \in V \setminus B$,
    \begin{align*}
        \mathbb{P}(v \notin N_{\textrm{out}}(R))
            & = \frac{\binom{|V| - d_{\textrm{in}}(v)}{\ell}}{\binom{|V|}{\ell}} 
            = \left( \frac{|V| - d_{\textrm{in}}(v)}{|V|} \right)\left( \frac{|V| - d_{\textrm{in}}(v) - 1}{|V| - 1} \right)\dots\left( \frac{|V| - d_{\textrm{in}}(v) - \ell + 1}{|V| - \ell + 1} \right) \\
            & \leq \left( \frac{|V| - d_{\textrm{in}}(v)}{|V|} \right)^\ell 
            \leq 2^{-d_{\textrm{in}}(v) \ell / |V|}
            \leq 2^{-2\log k} 
            = \frac{1}{k^2}.
    \end{align*}
    By the union bound,
        $\mathbb{P}(V \setminus B \nsubseteq N_{\textrm{out}}(R))
            \leq \sum_{v \in V \setminus B} \mathbb{P}(v \notin N_{\textrm{out}}(R)))
            \leq |V|/k^2
            \leq 1/2$,
    so, by the probabilistic method, there must be some $R \in \binom{V}{\ell}$ that dominates $V \setminus B$, in which case $D = B \cup R$ is a dominating set of size at most $|B| + \ell \leq 4k^{3/2}\sqrt{\log k}$.

    As for finding such a dominating set algorithmically, pick $R \in \binom{V}{k^{3/2}\sqrt{\log{k}}}$ uniformly at random. Iterate over $v = \{a, b\} \in R$, add $v$ and its triangular out-neighbours to a hashtable in $O(k)$ time\footnote{We treat the time to check whether $|\langle q_j - q_{j'}, S_{ii'}\rangle| \geq \phi\lVert q_j - q_{j'}\rVert_1$ as a single unit of computation. In practice, this requires computing a sum or integral and depends on how the distributions are stored in memory.} by checking $\{a, i\}$ and $\{b, i\}$ for each $i \in [k] \setminus v$. By the preceding calculation, $R$ together with all uncovered vertices in the hashtable is a dominating set of size at most $4k^{3/2}\sqrt{\log{k}}$ with probability greater than $1/2$, so repeating until this is the case achieves the desired expected runtime.
\end{proof}

\section{Barriers to a Near-Linear Algorithm}

The ideal algorithm for non-interactive locally private hypothesis selection would require only $\tilde O(k)$ samples, matching known lower bounds for the problem~\cite{DuchiR19,GopiKKNWZ20}.
In this section, we rule out two different approaches one could employ to design such an algorithm.

\subsection{An $\tilde{\Omega}(k^{3/2})$ Lower Bound under Triangular Substructure Assumption}
\label{subsec:triangle_tight}
One could conceive of a strengthening of Theorem~\ref{thm:dom}, which argues that the domination number of any Scheff\'e graph is $\tilde O(k)$.
Unfortunately, we argue that the triangular substructure described in Proposition~\ref{prop:triangle} is insufficient to yield a better bound than $\tilde{O}(k^{3/2})$.
Therefore, to go beyond this bound, one must employ additional structure of the Scheff\'e graph.

\begin{theorem}
    For all sufficiently large $k$, there is a digraph $G_k$ on vertices $\binom{[k]}{2}$ satisfying the triangular substructure condition of Proposition~\ref{prop:triangle} for which
    \begin{align*}
        \textrm{dom}(G_k) \geq \frac{k^{3/2}}{8\sqrt{\log{k}}} = \tilde{\Omega}(k^{3/2}).
    \end{align*}
\end{theorem}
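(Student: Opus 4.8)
The plan is to build $G_k$ so that a single bidirected pair inside each triangle already certifies the substructure, and then to choose those pairs so that a large set of vertices becomes expensive to dominate. First I would record a one-line fact: if, for the $3$-subset $\{j,j',j''\}$, any one of the three pairs of vertices $\{\{j,j'\},\{j,j''\}\}$, $\{\{j,j'\},\{j',j''\}\}$, $\{\{j,j''\},\{j',j''\}\}$ is made bidirectional, then the condition of Proposition~\ref{prop:triangle} holds for that triple (for any way of naming $j,j',j''$). Hence any digraph obtained by picking, for each $3$-subset, one of its three pairs of vertices and making it bidirectional satisfies Proposition~\ref{prop:triangle}. Since two vertices $\{a,b\},\{c,d\}$ lie in a common triangle of $\binom{[k]}{2}$ exactly when they share one index, and then in exactly one triangle, these per-triangle choices never conflict.

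I would parametrize the construction by an auxiliary graph $F$ on $[k]$ and let $V_F := \{\{a,b\} : ab \in E(F)\}\subseteq\binom{[k]}{2}$ be the corresponding vertex subset of $G_k$. For a $3$-subset $\{x,y,z\}$, choose the bidirected pair of its triangle according to how many of $\{x,y\},\{x,z\},\{y,z\}$ lie in $V_F$: if at most one does, pick a bidirected pair that avoids creating an edge into $V_F$ (possible, since then at least two of the three vertices lie outside $V_F$); if exactly two do — say $\{a,b\},\{a,c\}\in V_F$, $\{b,c\}\notin V_F$, which forces at least one edge into $V_F$ — bidirect $\{\{a,b\},\{b,c\}\}$, so the forced edge into $V_F$ has tail $\{b,c\}$; if all three do, bidirect a fixed pair. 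The key estimate is then that every vertex $u=\{p,q\}$ of $G_k$ has at most $\Delta_2(F):=\max_{p\ne q}|N_F(p)\cap N_F(q)|$ out-neighbours in $V_F$: an edge into $V_F$ is produced only by a triangle of the second type (tail $=$ its apex) or of the third type (tail one of its two $V_F$-vertices), and in both cases the number of triangles that can make a fixed $u=\{p,q\}$ the tail of such an edge is at most the number $|N_F(p)\cap N_F(q)|$ of common $F$-neighbours of $p$ and $q$. Because any dominating set $D$ must cover $V_F$ and each $v \in D$ covers at most $1+\Delta_2(F)$ vertices of $V_F$,
\begin{align*}
\textrm{dom}(G_k)\ \geq\ \frac{|V_F|}{\Delta_2(F)+1}\ =\ \frac{|E(F)|}{\Delta_2(F)+1}.
\end{align*}

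It remains to choose $F$ with many edges and small codegree. I would take $F = G(k,p)$ with $p = c\sqrt{(\log k)/k}$ for a suitable constant $c$: a Chernoff bound gives $|E(F)| = \Theta(k^{3/2}\sqrt{\log k})$, and a union bound over the $\binom{k}{2}$ pairs (each with codegree of mean $\Theta(\log k)$) gives $\Delta_2(F) = O(\log k)$, both with positive probability once $k$ is large. This yields $\textrm{dom}(G_k) = \Omega(k^{3/2}/\sqrt{\log k})$, and calibrating $c$ gives the stated constant $1/8$. (Alternatively, taking $F$ to be an explicit $C_4$-free graph such as a projective-plane polarity graph padded with isolated vertices gives $\Delta_2(F)\le 2$ and $|E(F)| = \Omega(k^{3/2})$, hence the stronger $\textrm{dom}(G_k)=\Omega(k^{3/2})$; the random choice is simpler and enough here.)

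I expect the main obstacle to be the middle step — the verification that the routing rule keeps every out-degree into $V_F$ at most $\Delta_2(F)$. This needs care in identifying exactly which triangles are forced to emit an edge into $V_F$, in arranging for each such edge to leave a vertex shared by only codegree-many such triangles, and in the small casework distinguishing $u \in V_F$ from $u \notin V_F$. The remaining ingredients — that one bidirected pair per triangle certifies Proposition~\ref{prop:triangle}, and the concentration estimates for $F$ — are routine.
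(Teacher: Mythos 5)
Your proposal is correct and takes essentially the same route as the paper: the paper's random set $R \subseteq \binom{[k]}{2}$ is exactly the edge set of your auxiliary random graph $F$, its quantity $|I_v^R|$ is your codegree $\Delta_2(F)$, its orientation rule (route each triangle's edge away from $R$ whenever possible) is your routing rule, and the final count is the same $|R|/(\Delta_2+1)$ bound. The only real variations are minor: you certify the triangular substructure with one bidirected pair per triangle rather than a per-vertex out-edge (which does check out across all labelings), and your explicit $C_4$-free (polarity-graph) alternative would actually strengthen the bound to $\Omega(k^{3/2})$, removing the $\sqrt{\log k}$ loss.
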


\begin{proof}
Draw uniformly at random a set $R$ of size $\ell := \frac{1}{4}k^{3/2} \sqrt{\log{k}}$ from the vertex set $V = \binom{[k]}{2}$. For a fixed vertex $v = \{a, b\} \in V$, consider the set of indices
    \begin{align*}
        I^R_v := \{i \in [k] \setminus v : \{a, i\}, \{b, i\} \in R\}
    \end{align*}
    that form a triangle with $v$ in which both vertices other than $v$ land in $R$. We aim to show that this set is small with high probability. Indeed, setting $t := 2\log{k}$, the union bound yields
    \begin{align*}
        \mathbb{P}(|I_v^R| \geq t)
            & \leq \sum_{J \in \binom{[k] \setminus v}{t}} \mathbb{P}(\{\{a, i\} : i \in J\} \cup \{\{b, i\} : i \in J\} \subseteq R) 
            = \binom{k - 2}{t} \cdot \frac{\binom{|V| - 2t}{\ell - 2t}}{\binom{|V|}{\ell}} \\
            & \leq \left( \frac{e(k-2)}{t} \right)^t \left( \frac{\ell}{|V|} \right)^{2t} 
            = \left( \frac{e(k-2)}{2\log{k}} \cdot \frac{\frac{1}{16}k^3\log{k}}{\frac{1}{4}k^2(k - 1)^2} \right)^t 
            = \left( \underbrace{\frac{e}{8}}_{\leq 1/2} \cdot \underbrace{\frac{k(k - 2)}{(k - 1)^2}}_{\leq 1} \right)^t\\
            & \leq 2^{-2\log{k}} 
            = 1/k^2.
    \end{align*}

    By another union bound, we can show that all such sets are likely to be small simultaneously, i.e.
        $\mathbb{P}(\exists v \in V, |I_v^R| \geq t) \leq |V|/k^2 \leq 1/2$,
    so there must be some $R \subseteq V$ of size $\ell = \frac{1}{4}k^{3/2}\sqrt{\log{k}}$ for which all $I_v^R$ have size less than $t = 2\log{k}$.

    We can now form the bad digraph $G_k$ on the vertex set $V$ by adding edges as follows. For every $v = \{a, b\} \in V$ and every $i \in [k] \setminus v$,
    \begin{align*}
        \{a, i\} \in R, \{b, i\} \notin R &\implies \{a, b\} \to \{b, i\} \\
        \{a, i\} \notin R, \{b, i\} \in R &\implies \{a, b\} \to \{a, i\} \\
        \text{otherwise}  &\implies \{a, b\} \to \{a, i\} \text{ or } \{a, b\} \to \{b, i\} \text{ arbitrarily}.
    \end{align*}

    Clearly, every triangle of $G_k$ satisfies either condition (ii) or (iii) of Proposition~\ref{prop:triangle}. Moreover, for a vertex $v = \{a, b\}$, it can only dominate an element $\{a, i\}$ or $\{b, i\}$ of $R$ if both $\{a, i\}$ and $\{b, i\}$ belong to $R$, namely $i \in I_v^R$. Therefore, including itself, $v$ dominates at most $|I_v^R| + 1 \leq 2\log{k}$ elements of $R$, so any dominating set must have size at least
        $\frac{|R|}{2\log{k}} = \frac{k^{3/2}}{8\sqrt{\log{k}}}$.
\end{proof}

\subsection{A Counterexample to Flattening}
\label{subsec:flattening}

Another possible technique for non-interactive LDP hypothesis selection is that of \textit{flattening}, which is discussed in \cite{GopiKKNWZ20}. The following conjecture (Question 4.4 in that work), states that any collection of distributions over a finite domain can be mapped to distributions close to uniform while still preserving their pairwise $\ell_1$-distances.
Note that the original conjecture contained minor mistakes as it was written---including a missing factor of two---which we have corrected.

\begin{conjecture}[Flattening]
    Let $q_1, \dots, q_k \in \Delta([n])$ be distributions that are separated in $\ell_1$-distance by at most $2\alpha$. Then there exists a randomized map $\phi : [n] \to [m]$ satisfying
    \begin{enumerate}
        \item For all $1 \leq j \leq k$ and $y \in [m]$, $\frac{1 - \alpha}{m} \leq \phi q_j(y) \leq \frac{1 + \alpha}{m}$ and
        \item For all $1 \leq j < j' \leq k$, $\lVert \phi q_j - \phi q_{j'} \rVert_1 \in \Theta(\lVert q_j - q_{j'} \rVert_1)$
    \end{enumerate}
    where $\phi q$ means the distribution of $\phi(x), x \sim q$.
\end{conjecture}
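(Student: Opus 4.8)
The plan is to reduce the conjecture to a one-dimensional binning problem and then attack that problem by a randomized construction. Let $\bar q := \frac1k \sum_j q_j$. Since the $q_j$ are pairwise within $2\alpha$, convexity of the norm gives $\lVert q_j - \bar q \rVert_1 \le \frac1k \sum_{j'} \lVert q_j - q_{j'} \rVert_1 \le 2\alpha$ for every $j$. Order the domain by the CDF of $\bar q$ and map atom $x$ to the sub-interval of $\bar q$-length $\bar q(x)$ that it occupies inside $[0,1)$; this realizes $\bar q$ as Lebesgue measure on $[0,1)$ and realizes $q_j$ as the measure with Lebesgue density $f_j(u) := q_j(x)/\bar q(x)$ on the interval of $x$. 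Then $\int_0^1 f_j = 1$, $\lVert f_j - 1 \rVert_{L^1[0,1)} = \lVert q_j - \bar q\rVert_1 \le 2\alpha$, and crucially $0 \le f_j \le k$ since $\bar q(x) \ge \tfrac1k q_j(x)$. A randomized map $\phi : [n] \to [m]$ that keeps each atom's mass together at this resolution is exactly a partition of $[0,1)$ into $m$ measurable pieces $P_1, \dots, P_m$, under which $\phi q_j(y) = \int_{P_y} f_j$. In these terms Condition~1 asks that $\mathrm{avg}_{P_y} f_j \in [1-\alpha,1+\alpha]$ for all $j,y$ (which forces each $\lvert P_y\rvert = \tfrac1m(1\pm o(1))$), and Condition~2 asks that $\sum_y \bigl\lvert \int_{P_y}(f_j - f_{j'})\bigr\rvert \ge c\,\lVert f_j - f_{j'}\rVert_1$ for all $j \ne j'$; note $\lVert f_j - f_{j'}\rVert_{L^1} = \lVert q_j - q_{j'}\rVert_1$, so this really is distance preservation.

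The easy half: the \emph{upper} bound $\lVert \phi q_j - \phi q_{j'}\rVert_1 \le \lVert q_j - q_{j'}\rVert_1$ holds for \emph{any} partition by the data-processing inequality for $\ell_1$ distance, so only the lower bound in Condition~2 is at issue. And Condition~1 alone is easy: choose $m$ as we like, refine $[0,1)$ into $mN$ equal cells with $N$ a sufficiently large multiple of $k^2\log(km)/\alpha^2$, and let the $P_y$ be a uniformly random balanced assignment of cells into $m$ groups of $N$. Then $\int_{P_y} f_j$ is an average of $N$ cell-integrals, each lying in $[0, k/(mN)]$ with overall mean $1/(mN)$, so Hoeffding/Bernstein for sampling without replacement gives $\Pr\bigl[\lvert \int_{P_y} f_j - \tfrac1m\rvert > \tfrac\alpha m\bigr] \le 2\exp(-\Omega(N\alpha^2/k^2))$, and a union bound over the $\le mk$ pairs $(j,y)$ makes Condition~1 hold with high probability.

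The main obstacle is that the mixing used for Condition~1 is in direct tension with the lower bound of Condition~2. Under a uniformly random balanced partition the signed difference $g := f_j - f_{j'}$ (which has $\int_0^1 g = 0$) is sampled near-evenly into each piece, so $\int_{P_y} g \approx 0$ for every $y$ and $\lVert \phi q_j - \phi q_{j'}\rVert_1$ collapses to $o(\lVert q_j - q_{j'}\rVert_1)$ whenever the positive and negative parts of $g$ are spread out --- precisely a regime the hypotheses do not exclude. So the construction must be \emph{structured}: the partition has to be mixing enough to flatten all $k$ densities $f_1,\dots,f_k$ yet aligned enough with the signed Scheff{\'e} sets of all $\binom k2$ differences $f_j-f_{j'}$ that it never cancels one. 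The route I would try is a discrepancy-theoretic/greedy argument: regard each cell as carrying the vector of its contributions to the $f_j$'s (to be balanced across pieces) together with its signs relative to the various $g$'s (to be kept coherent within a piece), and seek a balanced partition with small discrepancy in the first coordinates but bounded cancellation in the second. This is the step I expect to be the crux, and since ``flatten everything'' and ``preserve every difference'' genuinely pull in opposite directions, my honest expectation is that no such partition exists for worst-case families; in that case the right move is to exhibit a counterexample --- a family of not-already-flat distributions (forced to have heavy atoms that some partition must split) whose pairwise differences carry a code-like sign structure that makes any Condition~1 partition annihilate some pair. That is exactly the gap the following section must confront.
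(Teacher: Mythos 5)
The statement you were asked to prove is in fact \emph{false}, and the paper's own treatment of it is a refutation, not a proof. Your proposal correctly senses this: after reducing to a binning problem on $[0,1)$, checking that Condition~1 alone is achievable by a random balanced partition, and noting that the upper bound in Condition~2 is free by data processing, you conclude that the two conditions ``genuinely pull in opposite directions'' and that ``the right move is to exhibit a counterexample.'' That instinct is right, and your sketch of what the counterexample should look like --- heavy atoms that force any Condition-1 map to spread mass, combined with a code-like sign structure on the pairwise differences --- is exactly the shape of the paper's construction. But you stop there: you deliver neither a proof of the conjecture (impossible) nor a concrete counterexample, so the crux is missing and the proposal as written establishes nothing beyond the easy partial facts.

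The paper's refutation takes $Q = (E, F)$ with $k = 2n$, where $E = I_n$ consists of the $n$ point masses and $F$ is built from the $n \times n$ Hadamard matrix $H$: the first column $f_1$ is $\mathbf{1}/n$ and $f_j$ is the $j$th column of $H$ with $-1$ replaced by $0$ and $+1$ replaced by $2/n$, so that $\lVert f_j - f_1 \rVert_1 = 1$ for $j > 1$. Condition~1 applied to the point masses forces every entry of the stochastic matrix $\phi$ to lie in $[0, 2/m]$, hence $\lVert \phi \rVert_F^2 \leq 4n/m$. Since $H/\sqrt{n}$ is orthogonal and $H = n(f_1, f_2 - f_1, \dots, f_n - f_1)$, one gets $\lVert \phi(f_1, f_2 - f_1, \dots, f_n - f_1)\rVert_F^2 = \lVert \phi \rVert_F^2/n \leq 4/m$, and averaging over columns plus Cauchy--Schwarz produces some $i$ with $\lVert \phi f_i - \phi f_1 \rVert_1 \leq 2/\sqrt{n} = o(\lVert f_i - f_1 \rVert_1)$, violating Condition~2. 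Your discrepancy-theoretic plan for building a good partition cannot be completed because no such partition exists for this family; to repair your write-up, replace the final speculative paragraph with this (or an equivalent) explicit construction and the Frobenius-norm argument showing that any entrywise-small stochastic map must collapse at least one member of an orthogonal system of differences.
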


If the conjecture is true, then one can in effect compare any two distributions $q_j$ and $q_{j'}$ by applying the mapping $\phi$ to each and then comparing them separately to the uniform distribution $\mathcal{U}([m])$. In this case, only $k$ comparisons to the intermediary uniform distribution are required to gain information about all $\binom{k}{2}$ pairwise comparisons. These comparisons can be carried out in parallel with a small number of samples each, leading to non-interactive LDP hypothesis selection. For more details see the proof of Lemma 4.1 in \cite{GopiKKNWZ20}. Unfortunately, we show that this conjecture is false.

\begin{proof}[Counterexample to Flattening Conjecture]
    For simplicity, we identify a distribution with its mass function as a column vector in $\mathbb{R}^n$ or $\mathbb{R}^m$ and a sequence of distributions $(q_1, \dots, q_k)$ with the matrix $Q$ whose columns are $q_1, \dots, q_k$. Similarly, we identify a stochastic map $\phi$ with a left stochastic matrix (LSM) in $\mathbb{R}^{m \times n}$ so that $(\phi q_1, \dots, \phi q_k)$ is just a matrix multiplication $\phi(q_1, \dots, q_k)$
    
    Consider the $n \times n$ identity matrix $E := I_n$. We construct $n$ additional distributions that, for the purposes of flattening, will conflict with the columns of $E$. To that end, let $H$ denote the $n \times n$ Hadamard matrix, i.e., $H_{ij} := (-1)^{\langle i, j\rangle \bmod 2}$ where $i$ and $j$ are viewed as binary strings of length $\log{n}$. Key properties of this matrix are that $H/\sqrt{n}$ is orthonormal, every pair of columns differs in exactly $n/2$ entries, and every column sums to $0$, except for the first column which is all ones. Now, let $F$ be an $n \times n$ matrix whose first column $f_1$ is the uniform distribution $\mathbf{1}/n$ and whose $j$\textsuperscript{th} column $f_j$ is the $j$\textsuperscript{th} column of $H$ with $-1$ replaced by $0$ and $+1$ replaced by $2/n$. In this case, $Q = (E, F) \in \mathbb{R}^{n \times k}$ ($k := 2n$) consists of distributions separated in $\ell_1$-distance by at most $2$.

    Now, assume we have an LSM $\phi \in \mathbb{R}^{m \times n}$ satisfying the first condition $\phi Q \in [0, 2/m]^{m \times k}$. In particular, all entries of $\phi = \phi E$ must fall in $[0, 2/m]$. On the other hand, by construction we have $H = n(f_1, f_2 - f_1, \dots, f_n - f_1)$, so, since $\frac{1}{\sqrt{n}}H$ is orthonormal, we have
    \begin{align*}
        \lVert \phi(f_1, f_2 - f_1, \dots, f_n - f_1) \lVert_F^2
            = \lVert \phi H/\sqrt{n} \rVert_F^2/n
            = \lVert \phi \rVert_F^2/n
            \leq mn(2/m)^2/n
            = 4/m.
    \end{align*}
    By averaging, there must be some $v \in \{f_1, f_2 - f_1, \dots, f_n - f_1\}$ for which
    \begin{align*}
        \lVert \phi v\rVert_1^2
            \leq m\lVert \phi v\rVert_2^2
            \leq m\lVert \phi(f_1, f_2 - f_1, \dots, f_n - f_1) \rVert_F^2/n
            \leq 4/n.
    \end{align*}
    Provided that $n > 4$, this is impossible for $v = f_1$ because $\lVert \phi f_1 \rVert_1 = 1$, so there must be $1 < i \leq n$ such that $\lVert \phi f_i - \phi f_1 \rVert_1 \leq 2/\sqrt{n} = o(1) = o(\lVert f_i - f_1 \rVert_1)$.
\end{proof}

\section{Conclusion}

In this work we introduce two new techniques for hypothesis selection.

The first is a relaxation of the classical minimum distance estimator in which the Scheff{\'e} sets are replaced by any collection of queries that is diverse enough for $\ell_1$-comparisons between any pair of candidate distributions.

The second is a new object called the Scheff{\'e} graph that contains structural information about the relationship between queries a hypothesis selection algorithm might ask. Our analysis of the Scheff{\'e} graph reveals a dense triangular substructure that can be exploited to yield a non-trivial reduction in query complexity. We show that our analysis of query complexity arising from the triangular substructure is nearly tight, so any further reduction in query complexity via the Scheff{\'e} graph will require the discovery of additional graph substructure.

Combining these two techniques yields an algorithm for non-interactive hypothesis selection under LDP constraints with state-of-the-art sample complexity, though we stress that our techniques are relevant to hypothesis selection problems more broadly.

\bibliographystyle{alpha}
\bibliography{biblio}

\end{document}